\def\llncs{0}
\def\fullpage{1}
\def\anonymous{0}
\def\authnote{1}
\def\notxfont{0}
\def\submission{0}
\def\llncs{1}
\definecolor{darkblue}{rgb}{0,0,0.6}
\definecolor{darkgreen}{rgb}{0,0.5,0}
\definecolor{maroon}{rgb}{0.5,0.1,0.1}
\definecolor{dpurple}{rgb}{0.2,0,0.65}
\DeclareMathAlphabet{\mathpzc}{OT1}{pzc}{m}{it}
\newtheoremstyle{thicktheorem}%
{\topsep}
{\topsep}
{\itshape}{}%
{\bfseries}%
{.}
{ }%
{\thmname{#1}\thmnumber{ #2}%
		\thmnote{ (#3)}%
}
\newtheoremstyle{remark}
{\topsep}
{\topsep}
	{}
	{}
	{}
	{.}
	{ }
	{\textit{\thmname{#1}}\thmnumber{ #2}
			\thmnote{ (#3)}%
	}
	\theoremstyle{thicktheorem}
	\newtheorem{theorem}{Theorem}[section]
	\newtheorem{lemma}[theorem]{Lemma}
	\newtheorem{corollary}[theorem]{Corollary}
	\newtheorem{definition}[theorem]{Definition}
	\theoremstyle{remark}
	\newtheorem{remark}[theorem]{Remark}
\Crefname{MyClaim}{Claim}{Claims}
	\crefname{theorem}{Theorem}{Theorems}
	\crefname{assumption}{Assumption}{Assumptions}
	\crefname{construction}{Construction}{Constructions}
	\crefname{corollary}{Corollary}{Corollaries}
	\crefname{conjecture}{Conjecture}{Conjectures}
	\crefname{definition}{Definition}{Definitions}
	\crefname{exmaple}{Example}{Examples}
	\crefname{experiment}{Experiment}{Experiments}
	\crefname{counterexample}{Counterexample}{Counterexamples}
	\crefname{lemma}{Lemma}{Lemmata}
	\crefname{observation}{Observation}{Observations}
	\crefname{proposition}{Proposition}{Propositions}
	\crefname{remark}{Remark}{Remarks}
	\crefname{claim}{Claim}{Claims}
	\crefname{fact}{Fact}{Facts}
	\crefname{note}{Note}{Notes}
 \crefname{appendix}{App.}{Appendices}
 \crefname{section}{Sec.}{Sections}
\renewcommand*{\backref}[1]{}
	\renewcommand*{\backref}[1]{(Cited on page~#1.)}
\newcommand{\mor}[1]{}
\newcommand{\minki}[1]{}
\newcommand{\takashi}[1]{}
\newcommand{\mor}[1]{$\ll$\textsf{\color{red} Tomoyuki: { #1}}$\gg$}
\newcommand{\takashi}[1]{$\ll$\textsf{\color{orange} Takashi: { #1}}$\gg$}
\newcommand{\SD}{\mathsf{SD}} 
\newcommand{\Good}{\mathsf{Good}}
\newcommand{\Samp}{\algo{Samp}}
\newcommand{\cA}{\mathcal{A}}
\newcommand{\cB}{\mathcal{B}}
\newcommand{\cC}{\mathcal{C}}
\newcommand{\cD}{\mathcal{D}}
\newcommand{\cE}{\mathcal{E}}
\def\makeuppercase#1{
\expandafter\newcommand\csname tl#1\endcsname{\widetilde{#1}}
}
\def\makelowercase#1{
\expandafter\newcommand\csname tl#1\endcsname{\widetilde{#1}}
}
\newcommand{\N}{\mathbb{N}}
\newcommand{\secp}{\lambda}
\newcommand*{\sk}{\keys{sk}}
\newcommand*{\pk}{\keys{pk}}
\newcommand{\ct}{\keys{ct}}
\newcommand*{\msg}{\keys{msg}}
\newcommand*{\keys}[1]{\mathsf{#1}}
\newcommand*{\algo}[1]{\ensuremath{\mathsf{#1}}}
\newenvironment{boxfig}[2]{\begin{figure}[#1]\fbox{\begin{minipage}{0.97\linewidth}
                        \vspace{0.2em}
                        \makebox[0.025\linewidth]{}
                        \begin{minipage}{0.95\linewidth}
            {{
                        #2 }}
                        \end{minipage}
                        \vspace{0.2em}
                        \end{minipage}}}{\end{figure}}
\newcommand{\bit}{\{0,1\}}
\newcommand{\Gen}{\algo{Gen}}
\newcommand{\Enc}{\algo{Enc}}
\newcommand{\Dec}{\algo{Dec}}
\newcommand{\Ver}{\algo{Ver}}
\newcommand{\Eval}{\algo{Eval}}
\newcommand{\Obf}{\mathsf{Obf}}
\newcommand{\obfC}{\hat{C}}
\newcommand{\CC}{\mathtt{C}}
\newcommand{\QQ}{\mathtt{Q}}
\newcommand{\Ext}{\mathsf{Ext}}
\newcommand{\negl}{{\mathsf{negl}}}
\newcommand{\poly}{{\mathrm{poly}}}
\DeclareRobustCommand
\title{From Worst-Case Hardness of $\mathsf{NP}$ to Quantum Cryptography via Quantum Indistinguishability Obfuscation}
\author{\empty}\institute{\empty}
\author{}
\author{
Tomoyuki Morimae\inst{1} \and Yuki Shirakawa\inst{1} \and Takashi Yamakawa\inst{2,3,1}
}
\institute{
 Yukawa Institute for Theoretical Physics, Kyoto University, Kyoto, Japan \and NTT Social Informatics Laboratories, Tokyo, Japan \and NTT Research Center for Theoretical Quantum Information, Atsugi, Japan
}
\author[1]{Tomoyuki Morimae}
\author[1]{ Yuki Shirakawa}
\author[2,3,1]{ Takashi Yamakawa}
\affil[1]{{\small Yukawa Institute for Theoretical Physics, Kyoto University, Kyoto, Japan}\authorcr{\small tomoyuki.morimae@yukawa.kyoto-u.ac.jp} \authorcr{\small yuki.shirakawa@yukawa.kyoto-u.ac.jp}}
\affil[2]{{\small NTT Social Informatics Laboratories, Tokyo, Japan}\authorcr{\small takashi.yamakawa@ntt.com}}
\affil[3]{{\small NTT Research Center for Theoretical Quantum Information, Atsugi, Japan}}
\date{}
\begin{document}

\maketitle

\begin{abstract}
Indistinguishability obfuscation (iO) has emerged as a powerful cryptographic primitive with many implications. While classical iO, combined with the infinitely-often worst-case hardness of $\mathsf{NP}$, is known to imply one-way functions (OWFs) and a range of advanced cryptographic primitives, the cryptographic implications of quantum iO remain poorly understood. In this work, we initiate a study of the power of quantum iO. We define several natural variants of quantum iO, distinguished by whether the obfuscation algorithm, evaluation algorithm, and description of obfuscated program are classical or quantum. For each variant, we identify quantum cryptographic primitives that can be constructed under the assumption of quantum iO and the infinitely-often quantum worst-case hardness of $\mathsf{NP}$ (i.e., $\mathsf{NP} \not\subseteq \mathsf{i.o.BQP}$). 
In particular, we construct pseudorandom unitaries, QCCC quantum public-key encryption and (QCCC) quantum symmetric-key encryption, 
and several primitives implied by them such as
one-way state generators, (efficiently-verifiable) one-way puzzles, and EFI pairs, etc.
While our main focus is on quantum iO, even in the classical setting, our techniques yield a new and arguably simpler construction of OWFs from classical (imperfect) iO and the infinitely-often worst-case hardness of $\mathsf{NP}$.
\end{abstract}

\thispagestyle{empty}
\newpage

\setcounter{tocdepth}{2}
\tableofcontents
\thispagestyle{empty}
\newpage

\setcounter{page}{1}

\section{Introduction}
\label{sec:introduction} 
Obfuscation enables us to transform a program into an unintelligible, ``scrambled'' form while preserving its functionality. The cryptographic study of obfuscation was initiated by Barak et al.~\cite{JACM:BGIRSVY12}, who formalized the notion and introduced the concept of virtual black-box (VBB) security. While they showed that VBB security is unachievable in general, they proposed a weaker notion called indistinguishability obfuscation (iO), leaving open the possibility of its realization.

About a decade later,\footnote{A preliminary version of \cite{JACM:BGIRSVY12} was published at CRYPTO 2001 and a preliminary version of \cite{SICOMP:GGHRSW16} was published at FOCS 2013.} Garg et al.~\cite{SICOMP:GGHRSW16} proposed the first candidate construction of iO. This breakthrough led to a cascade of results: it was soon discovered that iO, when combined with one-way functions (OWFs), enables the construction of a wide array of powerful cryptographic primitives (e.g., \cite{C:HohSahWat13,EC:GGGJKL14,STOC:SahWat14,C:BonZha14,AC:KomNaoYog14,C:ChuLinPas15,STOC:KopLewWat15,STOC:CHNVW16}), some of which had no known constructions prior to iO. As a result, iO has come to be viewed as a ``central hub''~\cite{STOC:SahWat14} in cryptography.

Given its remarkable utility, iO has attracted extensive research interest, both in terms of proposing new constructions (e.g., \cite{C:CorLepTib13,EC:BGKPS14,TCC:BraRot14,C:CorLepTib15,TCC:GenGorHal15,EC:BMSZ16,TCC:GMMSSZ16,EC:BDGM20}, and developing attacks (e.g., \cite{EC:CHLRS15,EC:HuJia16,C:MilSahZha16,EC:CheGenHal17,C:CHKL18,C:CCHKL19}). This line of work culminated in a landmark result by Jain, Lin, and Sahai~\cite{STOC:JaiLinSah21}, who gave a construction of iO based on long-studied and well-founded cryptographic assumptions.

Despite the tremendous power of iO when combined with OWFs, it is notable that iO alone does not yield any cryptographic primitives. For example, in a hypothetical world where $\mathsf{P} = \mathsf{NP}$ (or even $\mathsf{BPP} = \mathsf{NP}$), no cryptographic primitives can exist, yet iO still exists~\cite{FOCS:KMNPRY14}. This highlights the fact that, for iO to be cryptographically useful, one must at least assume the worst-case hardness of $\mathsf{NP}$.

In this context, Komargodski et al.~\cite{FOCS:KMNPRY14} showed that, assuming only the (infinitely-often) worst-case hardness of $\mathsf{NP}$ (i.e., $\mathsf{NP} \not\subseteq \mathsf{i.o.BPP}$),\footnote{Here, a language $L$ is in $\mathsf{i.o.BPP}$ if all $x\in L\cap\bit^n$ are correctly decided in probabilistic polynomial time for infinitely-many $n\in\mathbb{N}$.} one can already construct OWFs from iO. Once OWFs are obtained, combining them with iO yields public-key encryption (PKE) and many other advanced cryptographic primitives.

The construction of a OWF based on iO in \cite{FOCS:KMNPRY14} is quite simple, at least assuming the perfect correctness of the obfuscator.  
Given an obfuscator $\Obf$, one can define a function $f$ by
\begin{align}
f(r)\coloneqq\Obf(Z;r)
\end{align}
where $Z$ denotes the zero-function that outputs $0$ on all inputs, and the notation ``$;r$'' indicates that $\Obf$ uses randomness $r$. The authors showed that the one-wayness of $f$ follows from the assumption that $\mathsf{NP} \not\subseteq \mathsf{i.o.BPP}$.  They also demonstrated that even an imperfectly correct iO suffices to construct OWFs, although the construction in that case is more intricate.

As the above construction illustrates, a key requirement in \cite{FOCS:KMNPRY14} is that the obfuscator $\Obf$ is derandomizable, meaning that it behaves deterministically when given fixed randomness. In contrast, recent works~\cite{alagic2016quantumobfuscation,C:ABDS21,LATIN:BroKaz21,ITCS:BarMal22,STOC:BKNY23,STOC:ColGun24,cryptoeprint:2025/891} have considered quantum obfuscation, where $\Obf$ is a quantum algorithm that obfuscates either classical or quantum circuits.  In this setting, the implicit assumption of derandomizability in \cite{FOCS:KMNPRY14} breaks down: quantum algorithms inherently involve randomness due to measurement, and thus cannot be derandomized. As a result, the classical approach of \cite{FOCS:KMNPRY14} does not extend to quantum obfuscation. In fact, it appears unlikely that quantum iO implies OWFs, since a quantum obfuscator is intrinsically a randomized object, making it intuitively useless for constructing deterministic primitives such as OWFs. Nonetheless, recent works~\cite{Kre21,STOC:KQST23,kretschmer2024quantumcomputableonewayfunctionsoneway,STOC:LomMaWri24} have identified several cryptographic primitives that may exist even in the absence of OWFs, including 
one-way state generators (OWSGs)~\cite{C:MorYam22,TQC:MorYam24}, pseudorandom state generators (PRSGs)~\cite{C:JiLiuSon18}, pseudorandom unitaries (PRUs)~\cite{C:JiLiuSon18,cryptoeprint:2024/1652}, EFI pairs~\cite{ITCS:BCQ23}, efficiently-verifiable one-way puzzles (EV-OWPuzzs)~\cite{C:ChuGolGra24}, one-way puzzles (OWPuzzs)~\cite{STOC:KhuTom24}, etc. These primitives lie below OWFs in known implications, and their existence under weaker assumptions raises the possibility that quantum iO, together with quantum worst-case hardness of $\mathsf{NP}$, may suffice to construct them. This leads us to the central question of this work:
\begin{center}
\emph{Does quantum iO imply any quantum cryptographic primitive, assuming only the quantum worst-case hardness of $\mathsf{NP}$?}
\end{center}

\if0
\takashi{The motivation of considering quantum obfuscators may be weak, but I don't know how to motivate this.}
\mor{Can't we separate QQQ-iO and QQC-iO? Idea: Haar random unitary oracle $U$ and PP oracle. QQQ-iO exist with the following construction:
an obfuscated program of $C$ is $(U|C\rangle)^{\otimes poly}$. Eval oracle queries PP to find $C$, and output $C(x)$. Because of the PP oracle,
QCCC SKE does not exist.}\takashi{
Do you assume that $U^{-1}$ is available? But then this construction is not secure since once can recover $C$ from $U|C\rangle$. If $U$ is unavailable, I don't know hot to evaluate the obfuscated program $U|C\rangle$. Perhaps, do you mean to consider an additional oracle that takes $U|C\rangle$ and $x$ as input and outputs $C(x)$?
}
\fi

\subsection{Our Result}
In this work, we show that quantum iO for classical circuits, when combined with the infinitely-often quantum worst-case hardness of $\mathsf{NP}$ (i.e., $\mathsf{NP} \not\subseteq \mathsf{i.o.BQP}$), implies a range of quantum cryptographic primitives. The specific primitives that can be constructed depend on which components of the assumed iO, such as the obfuscation algorithm, the evaluation algorithm, and the description of obfuscated circuit, are quantum and which remain classical. We elaborate on these distinctions and their implications below. 

To capture the various flavors of quantum iO, we formalize it as a pair of quantum polynomial-time (QPT) algorithms: an obfuscation algorithm $\Obf$ and an evaluation algorithm $\Eval$:
\begin{description}
\item[$\Obf(1^\secp,C)$:] The obfuscation algorithm takes the security parameter $1^\secp$ and a classical circuit $C$ as input and outputs an obfuscated encoding $\hat{C}$, which may be a quantum state.
\item[$\Eval(\hat{C}, x)$:] The evaluation algorithm takes an obfuscated encoding $\hat{C}$ and an input $x$, and outputs $C(x)$ (with overwhelming probability).
\end{description}
The security requirement is that, for any pair of functionally equivalent classical circuits $C_0$ and $C_1$, the obfuscations $\Obf(C_0)$ and $\Obf(C_1)$ must be computationally indistinguishable to any QPT distinguisher.

We consider several variants of quantum iO, distinguished by whether each component, namely, the obfuscator $\Obf$, the evaluator $\Eval$, and the obfuscated encoding $\hat{C}$, is quantum or classical. Specifically, for each $(\mathtt{X}, \mathtt{Y}, \mathtt{Z}) \in \{\QQ, \CC\}^3$, we define $(\mathtt{X}, \mathtt{Y}, \mathtt{Z})$-iO as follows:
\begin{itemize}
\item If $\mathtt{X} = \QQ$, then $\Obf$ is a quantum algorithm; if $\mathtt{X} = \CC$, then $\Obf$ is classical.
\item If $\mathtt{Y} = \QQ$, then $\Eval$ is a quantum algorithm; if $\mathtt{Y} = \CC$, then $\Eval$ is classical.
\item If $\mathtt{Z} = \QQ$, then the obfuscated encoding $\hat{C}$ is a quantum state; if $\mathtt{Z} = \CC$, then $\hat{C}$ is a classical string.
\end{itemize}

While there are eight possible combinations of $(\mathtt{X}, \mathtt{Y}, \mathtt{Z})$, not all are meaningful. In particular, the case $\mathtt{Z} = \QQ$ only makes sense when both $\mathtt{X} = \QQ$ and $\mathtt{Y} = \QQ$, since a classical obfuscator cannot generate a quantum state and a classical evaluator cannot take a quantum state as input. Accordingly, we focus on the five meaningful variants: $(\QQ,\QQ,\QQ)$,  $(\QQ,\QQ,\CC)$, $(\QQ,\CC,\CC)$, $(\CC,\QQ,\CC)$, and $(\CC,\CC,\CC)$.

We emphasize that, throughout, the circuit being obfuscated is always a classical circuit. This modeling choice only strengthens our results since our goal is to identify lower bounds of quantum iO.

Assuming $\mathsf{NP} \not\subseteq \mathsf{i.o.BQP}$, we show the following results (see \Cref{figure} for the summary of the results):
\begin{itemize}
    \item $(\QQ,\QQ,\QQ)$-iO implies IND-CPA secure quantum symmetric key encryption (QSKE), where the secret key is classical but the ciphertext is quantum. In particular, it implies OWSGs and EFI pairs.
     \item $(\QQ,\QQ,\CC)$-iO implies IND-CPA secure symmetric key encryption (SKE) in the quantum-computation classical-communication (QCCC) model, referred to as QCCC SKE,  where all communication is classical, but local computations, such as encryption and decryption, may be quantum.
    In particular, it implies EV-OWPuzz, OWPuzzs, OWSGs, QEFID pairs and EFI pairs.
      \item $(\QQ,\CC,\CC)$-iO implies IND-CPA secure public key encryption (PKE) in the QCCC  model, referred to as QCCC PKE. 
    In particular, it implies EV-OWPuzz, OWPuzzs, OWSGs, QEFID pairs and EFI pairs.
          \item $(\CC,\QQ,\CC)$-iO implies IND-CPA secure QCCC PKE and (post-quantum) OWFs. 
    In particular, it implies all Microcrypt primitives implied by PRUs.
              \item $(\CC,\CC,\CC)$-iO implies IND-CPA secure PKE and (post-quantum) OWFs. 
    In particular, it implies all Microcrypt primitives implied by PRUs.
\end{itemize}

\if0
\mor{Currently, we just list results. It would be nice if we could say some summary about interesting implication or consequence or take-home message of our results
explaining why our results are important, although I have no idea.}\takashi{Me neither.}
\fi

We remark that the implication of $(\CC, \CC, \CC)$-iO can be obtained via a straightforward adaptation of the construction in \cite{FOCS:KMNPRY14}, as all components involved are classical, except that we consider quantum adversaries. 
Nonetheless, we believe that our proof is arguably simpler than that of \cite{FOCS:KMNPRY14}, which requires cascaded obfuscation, i.e., obfuscating an already obfuscated circuit, whereas our approach avoids it. 
In addition, an advantage of our approach is that it only requires obfuscation for 3CNF formulas, rather than for general classical circuits. 
This resolves the open problem left by \cite{FOCS:KMNPRY14}, namely, constructing OWFs from imperfectly correct iO under the assumption of worst-case hardness of $\mathsf{NP}$. We note, however, that this open problem has been recently resolved (in a stronger form that only requires witness encryption) by completely different techniques~\cite{STOC:HirNan24,cryptoeprint:2024/800}.

\usetikzlibrary{positioning} 
\usetikzlibrary{calc} 
\usetikzlibrary {quotes}
\tikzset{>=latex} 

\tikzstyle{mysmallarrow}=[->,black,line width=1.6]
\tikzstyle{myblackbotharrow}=[<->,black,line width=1.6]
\tikzstyle{myredbotharrow}=[<->,red,line width=1.6]
\tikzstyle{newarrow}=[->,red,line width=1.6]
\tikzstyle{newsinglearrow}=[->,red,line width=1.6]
\tikzstyle{carrow}=[->,red,line width=1.6]
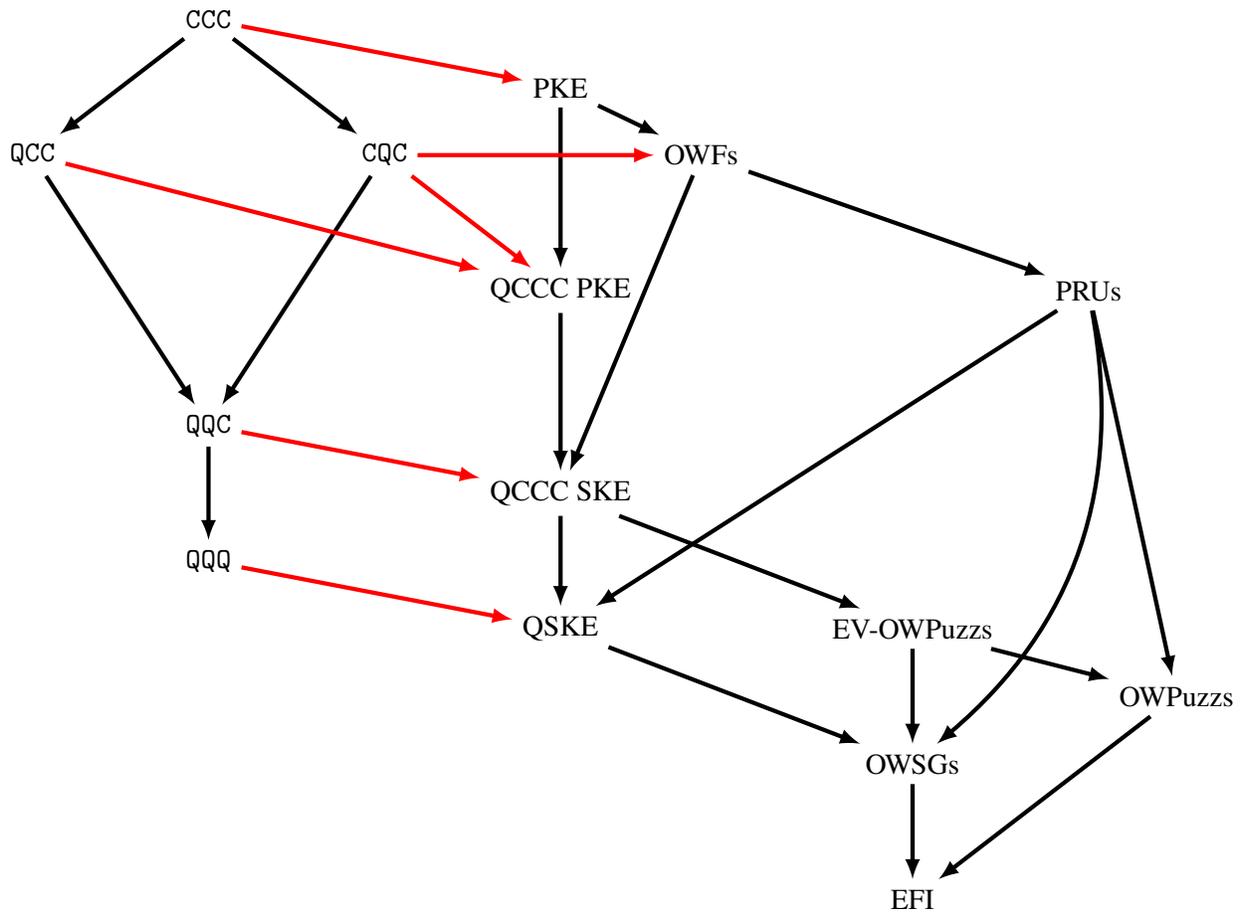
\begin{figure}
\begin{center}
    \begin{tikzpicture}[scale=0.9,every edge quotes/.style = {font=\footnotesize,fill=white}]
      \def\h{-2.0} 
      \def\w{2.6} 

      \node[] (CCC) at (1*\w,0*\h) {$\CC\CC\CC$};
      \node[] (QCC) at (0*\w,1*\h) {$\QQ\CC\CC$};
      \node[] (CQC) at (2*\w,1*\h) {$\CC\QQ\CC$};
      \node[] (QQC) at (1*\w,3*\h) {$\QQ\QQ\CC$};
      \node[] (QQQ) at (1*\w,4*\h) {$\QQ\QQ\QQ$};
      
      \node[] (PKE) at (3*\w,0.5*\h) {PKE};
      
      \node[] (OWF) at (3.8*\w,1*\h) {OWFs};
      
      \node[] (QCCCPKE) at (3*\w,2*\h) {QCCC PKE};
      \node[] (QCCCSKE) at (3*\w,3.5*\h) {QCCC SKE};
      \node[] (QSKE) at (3*\w,4.5*\h) {QSKE};
      
       \node[] (PRUs) at (6*\w,2*\h) {PRUs};
       \node[] (EVOWPuzz) at (5*\w,4.5*\h) {EV-OWPuzzs};
       \node[] (OWPuzz) at (6.5*\w,5*\h) {OWPuzzs};
       \node[] (OWSG) at (5*\w,5.5*\h) {OWSGs};
       \node[] (EFI) at (5*\w,6.5*\h) {EFI};

        \draw[mysmallarrow] (CCC) edge[] (QCC);
        \draw[mysmallarrow] (CCC) edge[] (CQC);
        \draw[mysmallarrow] (QCC) edge[] (QQC);
        \draw[mysmallarrow] (CQC) edge[] (QQC);
        \draw[mysmallarrow] (QQC) edge[] (QQQ);
        
        \draw[mysmallarrow,color=red] (CCC) edge[] (PKE);
        
        \draw[mysmallarrow] (PKE) edge[] (OWF);
        \draw[mysmallarrow] (PKE) edge[] (QCCCPKE);
        
        \draw[mysmallarrow,color=red] (QCC) edge[] (QCCCPKE);
        \draw[mysmallarrow,color=red] (CQC) edge[] (QCCCPKE);
        
        \draw[mysmallarrow,color=red] (CQC) edge[] (OWF);
        
        \draw[mysmallarrow] (QCCCPKE) edge[] (QCCCSKE);
        
        \draw[mysmallarrow,color=red] (QQC) edge[] (QCCCSKE);
        \draw[mysmallarrow] (OWF) edge[] (QCCCSKE);
        
        \draw[mysmallarrow] (QCCCSKE) edge[] (QSKE);
        \draw[mysmallarrow,color=red] (QQQ) edge[] (QSKE);

        \draw[mysmallarrow] (OWF) edge[] (PRUs);
        \draw[mysmallarrow] (QCCCSKE) edge[] (EVOWPuzz);
        \draw[mysmallarrow] (EVOWPuzz) edge[] (OWSG);
        \draw[mysmallarrow] (QSKE) edge[] (OWSG);
        \draw[mysmallarrow] (OWSG) edge[] (EFI);
        \draw[mysmallarrow][bend left=30] (PRUs) edge[] (OWSG);
        \draw[mysmallarrow] (PRUs) edge[] (QSKE);

        \draw[mysmallarrow] (EVOWPuzz) edge[] (OWPuzz);
        \draw[mysmallarrow] (OWPuzz) edge[] (EFI);
        \draw[mysmallarrow] (PRUs) edge[] (OWPuzz);
        
    \end{tikzpicture}
\end{center}
\caption{A summary of results (assuming $\mathsf{NP} \not\subseteq \mathsf{i.o.BQP}$).
Black arrows are known results or trivial implications.
Red arrows are new results. 
$\QQ\QQ\QQ$, for example, means the $(\QQ,\QQ,\QQ)$-iO.
}
\label{figure}
\end{figure}

\subsection{Technical Overview}
Throughout this technical overview, we assume the infinitely-often quantum worst-case hardness of $\mathsf{NP}$ (i.e., $\mathsf{NP} \not\subseteq \mathsf{i.o.BQP}$). Our starting point is the Valiant-Vazirani theorem~\cite{VV86}, which provides a randomized classical reduction from any $\mathsf{NP}$ instance to a $\mathsf{UP}$ instance.\footnote{$\mathsf{UP}$ is a subclass of $\mathsf{NP}$ consisting of problems where every yes-instance admits a unique witness.} This immediately implies that $\mathsf{NP} \not\subseteq \mathsf{i.o.BQP}$ implies $\mathsf{UP} \not\subseteq \mathsf{i.o.BQP}$ as well. Therefore, in proving cryptographic implications, we may assume $\mathsf{UP} \not\subseteq \mathsf{i.o.BQP}$ without loss of generality.

\paragraph{The case of $(\CC, \CC, \CC)$-iO.}
We begin by focusing on the case of $(\CC, \CC, \CC)$-iO, as the other cases build on similar ideas. In this setting, we construct a (post-quantum) OWF and a PKE scheme. Since a PKE scheme can be easily constructed from $(\CC, \CC, \CC)$-iO and OWFs using the same approach as in the classical construction of~\cite{STOC:SahWat14}, it suffices to focus on constructing OWFs.

Our main technical result is the following simple yet powerful statement. 
Let $\secp$ be the security parameter.
Let $m$ be a polynomial in $\secp$. For a string $k \in \bit^m$, let $P_k$ denote a circuit computing the \emph{point function} with target $k$, i.e., $P_k(k) = 1$ and $P_k(k') = 0$ for all $k' \ne k$. Let $Z_m$ be a circuit computing the \emph{zero function} on $m$-bit inputs, that is, $Z_m(k') = 0$ for all $k' \in \bit^m$. Then, for some polynomial $m$, 
assuming 
$\mathsf{NP} \not\subseteq \mathsf{i.o.BQP}$, 
we show that
\begin{align}
\Obf(1^\secp, P_k) \approx_c \Obf(1^\secp, Z_m),
\end{align}
where $k \gets \bit^m$,\footnote{Here $k\gets\bit^m$ means that $k$ is sampled uniformly at random from $\bit^m$.} and $\approx_c$ means computational indistinguishability against QPT distinguishers.\footnote{In the actual theorem, we account for the circuit sizes of $P_k$ and $Z_m$, but omit these details here for simplicity.}

This indistinguishability directly implies the existence of OWFs. Intuitively, the distributions $\Obf(1^\secp, P_k)$ and $\Obf(1^\secp, Z_m)$ should be \emph{statistically far}, since the obfuscation of $P_k$ is functionally equivalent to a point function, while the obfuscation of $Z_m$ is functionally equivalent to a constant-zero function. Moreover, since $\Obf$ is assumed to be classical, both distributions are classically efficiently samplable. Such pairs of classically efficiently samplable, statistically far, but computationally indistinguishable distributions are known as \emph{EFID pairs}~\cite{Gol90,ITCS:BCQ23}, and existentially equivalent to OWFs.\footnote{The proof of this fact in \cite{Gol90} only considers classical adversaries, but it extends to the post-quantum setting in a straightforward manner.}

We now describe the idea for proving the computational indistinguishability between $\Obf(1^\secp, P_k)$ and $\Obf(1^\secp, Z_m)$.  
As discussed above, we may assume $\mathsf{UP} \not\subseteq \mathsf{i.o.BQP}$. 
Therefore, to prove the above indistinguishability under the assumption, it suffices to show that any distinguisher between $\Obf(1^\secp, P_k)$ and $\Obf(1^\secp, Z_m)$ can be used to recover the unique witness of a yes-instance of a $\mathsf{UP}$ problem.

Let $x$ be a yes-instance of a $\mathsf{UP}$ language with a unique witness $w \in \bit^m$, and let $M$ be the corresponding verification algorithm. Our goal is to recover $w$ given $x$ and $M$. To do so, we take a uniformly random $v \in \bit^m$ and define a circuit $V[x,v](z)$ that outputs $1$ if $M(x, v \oplus z) = 1$ and otherwise outputs $0$. Note that $V[x,v](z)$ outputs $1$ only when $z = w \oplus v$, so it is functionally equivalent to the point function $P_{w \oplus v}$. Hence, by the security of iO, 
for each $v\in\bit^m$, 
we have
\begin{align}
\Obf(1^\secp, V[x,v]) \approx_c \Obf(1^\secp, P_{w \oplus v}).
\end{align} 
Moreover, since $v$ is chosen uniformly at random, $w \oplus v$ is also uniformly random, and so
\begin{align}
\Obf(1^\secp, P_{w \oplus v}) \equiv \Obf(1^\secp, P_k),
\end{align}
where $\equiv$ denotes distributional equivalence and $v,k \gets \bit^m$. Therefore, a distinguisher that distinguishes 
$\Obf(1^\secp, P_k)$ with $k\gets\bit^m$ from $\Obf(1^\secp, Z_m)$ also distinguishes $\Obf(1^\secp, V[x,v])$ with $v\gets\bit^m$ from $\Obf(1^\secp, Z_m)$.

While $V[x,v]$ and $Z_m$ are not functionally equivalent, they differ on only a single input, namely $w \oplus v$. It is known that any iO also satisfies the notion of \emph{differing-inputs obfuscation (diO)}~\cite{JACM:BGIRSVY12,EPRINT:ABGSZ13,TCC:BoyChuPas14} in the single differing-input setting. This means that if one can distinguish two circuits that differ on only one input, then one can efficiently extract that input. Thus, a distinguisher between $\Obf(1^\secp, V[x,v])$ with $v\gets\bit^m$ and $\Obf(1^\secp, Z_m)$ can be used to extract $w \oplus v$.

Since $v$ is chosen independently of both $x$ and $w$, we can construct a reduction algorithm that chooses $v$ on its own, extracts $w \oplus v$, and thereby recovers $w$. This completes the proof of the computational  indistinguishability.

\paragraph{The case of $(\CC, \QQ, \CC)$-iO.}
In this setting, we construct a (post-quantum) OWF and a QCCC PKE scheme. 

To construct a OWF, we use the same argument as in the $(\CC, \CC, \CC)$ setting. Notably, that construction does not rely on the assumption that $\Eval$ is classical, so the proof remains valid even when $\Eval$ is quantum.\footnote{Interestingly, the proof remains valid even if $\Eval$ is inefficient.}

For constructing a QCCC PKE scheme, we again follow the classical construction of~\cite{STOC:SahWat14}. The only difference is that evaluating the obfuscated program now involves quantum computation, making the encryption algorithm quantum and thus yielding a QCCC PKE scheme.

\paragraph{The case of $(\QQ, \CC, \CC)$-iO.}
In this setting, we construct a QCCC PKE scheme. Since $\Obf$ cannot be derandomized in this setting, it is unlikely that OWFs can be constructed. Therefore, we need a different approach from the classical construction of~\cite{STOC:SahWat14}.

We begin by noting that the following indistinguishability still holds in this setting:
\begin{align}
\Obf(1^\secp, P_k) \approx_c \Obf(1^\secp, Z_m),
\end{align}
where $k \gets \bit^m$. 
Based on this, we construct a QCCC PKE scheme as follows:
\begin{itemize}
    \item \textbf{Key generation:} Choose $k \gets \bit^m$, and output the public key $\hat{P}_k \gets \Obf(1^\secp, P_k)$ and the secret key $k$.
    
    \item \textbf{Encryption:} On input a public key $\hat{P}_k$ and a message $\msg$, define a classical circuit $C[\hat{P}_{k}, \msg]$ that takes $k' \in \bit^m$ as input and outputs $\msg$ if $\Eval(\hat{P}_{k}, k') = 1$, and outputs $\bot$ otherwise. Here, we assume for simplicity that $\Eval$ is deterministic (see the full proof in \Cref{{sec:QCC}} for how to handle randomized $\Eval$). The ciphertext is defined as $\hat{C}[\hat{P}_{k}, \msg] \gets \Obf(1^\secp, C[\hat{P}_{k}, \msg])$.
    
    \item \textbf{Decryption:} On input a ciphertext $\hat{C}[\hat{P}_{k}, \msg]$ and the secret key $k$, evaluate the obfuscated circuit on input $k$ and output the result $\msg'$.
\end{itemize}

The correctness of the scheme follows directly from the correctness of the iO.

\medskip

We now argue IND-CPA security of the scheme. First, consider a hybrid where the public key is replaced with $\Obf(1^\secp, Z_m)$. This is computationally indistinguishable from the real scheme by the indistinguishability between $\Obf(1^\secp, P_k)$ and $\Obf(1^\secp, Z_m)$, which has already been established.

Next, consider a second hybrid where the ciphertext is replaced with $\Obf(1^\secp, Z_m)$. This is indistinguishable from the previous hybrid because, when the public key is $\Obf(1^\secp, Z_m)$, the circuit $C[\Obf(1^\secp, Z_m), \msg]$ is functionally equivalent to the zero function regardless of the message $\msg$, and hence its obfuscation is indistinguishable from that of $Z_m$ by the security of iO.

In the final hybrid, the ciphertext reveals no information about the message $\msg$, so the scheme satisfies IND-CPA security.

\paragraph{The case of $(\QQ, \QQ, \CC)$-iO.}
In this setting, we construct a QCCC SKE scheme.

The idea is quite simple. We construct a QCCC SKE scheme for single-bit messages as follows: let $k \gets \bit^m$ be the secret key. To encrypt the message $0$, output the ciphertext $\Obf(1^\secp, Z_m)$; to encrypt the message $1$, output $\Obf(1^\secp, P_k)$. Decryption is performed by evaluating the obfuscated program (i.e., the ciphertext) on input $k$ and outputting the result, which will be either $0$ or $1$ accordingly.

While we have already established the indistinguishability between $\Obf(1^\secp, Z_m)$ and $\Obf(1^\secp, P_k)$, this alone is not sufficient to guarantee IND-CPA security of the above scheme. However, by carefully examining the proof of this indistinguishability, one can see that it extends to the case where the distinguisher is given multiple samples from the respective distributions, all generated using the same secret key $k$. This extension immediately implies the IND-CPA security of the scheme.

\paragraph{The case of $(\QQ, \QQ, \QQ)$-iO.}
In this setting, we construct a QSKE scheme, where the secret key is classical but the ciphertext is quantum.  The construction is exactly the same as in the $(\QQ, \QQ, \CC)$ case described above. The only difference is that the output of $\Obf$ is a quantum state, which means the ciphertexts are quantum. As a result, the scheme realizes QSKE rather than QCCC SKE.

\subsection{Related Work} 
Alagic and Fefferman \cite{alagic2016quantumobfuscation} and Alagic, Brakerski, Dulek, and Schaffner \cite{C:ABDS21} showed the impossibility of VBB obfuscation for classical circuits even when the obfuscator and the obfuscated programs are allowed to be quantum. 
Broadbent and Kazmi \cite{LATIN:BroKaz21} constructed an iO for quantum circuits, whose efficiency depends exponentially on the number of $\mathsf{T}$ gates in the circuit being obfuscated. Bartusek and Malavolta~\cite{ITCS:BarMal22} constructed 
iO for null quantum circuits in the classical oracle model.\footnote{A null quantum circuit is a quantum circuit with classical input and output that outputs $0$ with high probability on all inputs.} Bartusek, Kitagawa, Nishimaki, and Yamakawa~\cite{STOC:BKNY23} extended it to iO for pseudo-deterministic quantum circuits in the classical oracle model.\footnote{A pseudo-deterministic quantum circuit is a quantum circuit with classical input and output that computes a deterministic function with high probability.} 
Coladangelo and Gunn \cite{STOC:ColGun24} introduced the notion of quantum state iO, which allows for obfuscating a quantum description of a classical function, and provided a construction in the quantum oracle model. This was later improved by Bartusek, Brakerski, and Vaikuntanathan~\cite{STOC:BarBraVai24}, who gave a construction in the classical oracle model. Huang and Tang~\cite{cryptoeprint:2025/891} further improved it to support obfuscation of unitary quantum programs with quantum inputs and outputs.

Khurana and Tomer ~\cite{STOC:KhuTom25} presented a potential approach to constructing OWPuzzs based on the worst-case quantum hardness of $\#\mathsf{P}$. 
While the worst-case hardness of  $\#\mathsf{P}$ is a significantly weaker assumption than that of $\mathsf{NP}$, their current result relies on certain unproven conjectures related to quantum supremacy. While our work also relies on an additional assumption that quantum iO exists, our assumption is cryptographic in nature, whereas theirs pertains to quantum supremacy. Given the fundamental difference between the two, the approaches are not directly comparable.

Hirahara and Nanashima~\cite{STOC:HirNan24} proved that the infinitely-often worst-case (classical)  hardness of $\mathsf{NP}$ and (classically-secure) zero-knowledge arguments for $\mathsf{NP}$ imply OWFs (see also a simplified exposition in \cite{cryptoeprint:2024/800}). Since iO implies zero-knowledge arguments for $\mathsf{NP}$, this result improves upon that of~\cite{FOCS:KMNPRY14}. An interesting direction for future work is to establish a quantum analog of their result—for example, constructing quantum cryptographic primitives under the assumption of the infinitely-often worst-case quantum  hardness of $\mathsf{NP}$ and the existence of quantum zero-knowledge arguments for $\mathsf{NP}$.
Note that such a result is known if we assume the \emph{average-case} quantum hardness of $\mathsf{NP}$ instead of the worst-case quantum hardness~\cite{ITCS:BCQ23}.

A similar technique to ours—reducing an $\mathsf{NP}$ instance to a $\mathsf{UP}$ instance via the Valiant-Vazirani theorem, then rerandomizing the $\mathsf{UP}$ verification circuit through the obfuscation of a circuit in which the input is shifted by a random string—also appears in the work of Brakerski, Brzuska, and Fleischhacker~\cite{C:BraBrzFle16}, albeit in a completely different context. Their motivation was to prove the impossibility of statistically secure iO with approximate correctness. 

A concurrent work by Ilango and Lombardi~\cite{cryptoeprint:2025/1087} employs this technique in a context more closely related to ours. In particular, they independently present an alternative proof of the result in~\cite{FOCS:KMNPRY14} using this idea. Nevertheless, the focus of their work is quite different:  it is primarily set in the classical setting, aiming to establish fine-grained worst-case to average-case reductions using iO. While they do include one quantum result—providing proofs of quantumness from a worst-case assumption and iO—it relies on classical iO, and their work does not address quantum iO.
\section{Preliminaries}\label{sec:preliminaries}
\paragraph{Notations.} 
We use standard notations of quantum computing and cryptography.
We use $\secp$ as the security parameter.
$[n]$ means the set $\{1,2,...,n\}$.
For a finite set $S$, $x\gets S$ means that an element $x$ is sampled uniformly at random from the set $S$.
$\negl$ is a negligible function, and $\poly$ is a polynomial.
PPT stands for (classical) probabilistic polynomial-time and QPT stands for quantum polynomial-time. We refer to a non-uniform QPT algorithm as a QPT algorithm with polynomial-size quantum advice. 
We stress that the running time of the algorithm can be polynomial in $\secp$ rather than in $\log \secp$. 
For an algorithm $A$, $y\gets A(x)$ means that the algorithm $A$ outputs $y$ on input $x$.

\subsection{Cryptographic Primitives}
Here, we give definitions of basic cryptographic primitives. 

\begin{definition}[One-Way Functions (OWFs)]
\label{def:OWFs}
A function $f:\bit^*\to\bit^*$ is a (quantumly-secure) one-way function (OWF) if
it is computable in classical deterministic polynomial-time, and
for any QPT adversary $\cA$, there exists a negligible function $\negl$ such that
\begin{equation} \label{eq:OWF_condition}
\Pr[f(x')=f(x):
x\gets\bit^\secp,
x'\gets\cA(1^\secp,f(x))
]
\le\negl(\secp).
\end{equation} 
\end{definition}

\begin{definition}[Quantum Symmetric Key Encryption (QSKE)~\cite{TQC:MorYam24}]\label{def:QSKE} 
    A QSKE scheme 
    is a tuple $(\Gen,\Enc,\Dec)$ of QPT algorithms with the following syntax: 
    \begin{itemize}
        \item $\Gen(1^\secp)\to\sk$: A key generation algorithm takes the security parameter $1^\secp$ as input and outputs a classical secret key $\sk$.
        \item $\Enc(\sk,\msg)\to \ct$: An encryption algorithm takes a secret key $\sk$ and a message $\msg\in\bit^*$ as input and outputs a quantum ciphertext $\ct$.
        \item $\Dec(\sk,\ct)\to \msg'$: A decryption algorithm takes a secret key $\sk$ and a ciphertext $\ct$ as input and outputs a message $\msg'\in\bit^*$.
    \end{itemize}
    We require the following correctness and  IND-CPA security:
    \begin{itemize}
        \item \textbf{Correctness:} For all $\msg\in \bit^*$ of polynomial length in $\secp$, 
        \begin{align}
            \Pr \left[\msg'=\msg:
            \begin{gathered}
                \sk\gets\Gen(1^\secp) \\ 
                \ct\gets\Enc(\sk,\msg) \\
                \msg'\gets\Dec(\sk,\ct)
            \end{gathered}
            \right] \ge 1-\negl(\secp).
        \end{align}
        \item \textbf{IND-CPA Security:} For a security parameter $\secp\in\N$ and a bit $b\in\bit$, consider the following game between a challenger and an adversary $\cA$:
        \begin{enumerate}
            \item The challenger runs $\sk\gets\Gen(1^\secp)$.
            \item $\cA$ can make arbitrarily many classical queries to the encryption oracle, which takes a message $\msg\in \bit^*$ as input and returns $\Enc(\sk,\msg)$.
            \item $\cA$ chooses $(\msg_0,\msg_1)\in(\bit^*)^2$ of the same length and sends them to the challenger.
            \item The challenger runs $\ct_b\gets\Enc(\sk,\msg_b)$ and sends $\ct_b$ to $\cA$.
               \item Again, $\cA$ can make arbitrarily many classical queries to the encryption oracle. 
            \item $\cA$ outputs $b'$.
        \end{enumerate}
        We say that a QSKE scheme satisfies the IND-CPA security if for any QPT adversary $\cA$,
        \begin{align}
            | \Pr[b'=1|b=1]-\Pr[b'=1|b=0] | \le\negl(\secp).
        \end{align}
    \end{itemize}
\end{definition}

We define SKE and PKE schemes in the quantum-computation classical-communication (QCCC) model, in which all local computations are quantum and all communication is classical.

\begin{definition}[QCCC SKE~\cite{STOC:KhuTom24}]\label{def:QCCC_SKE}
A QCCC SKE scheme is defined similarly to a QSKE scheme as defined in \Cref{def:QSKE} except that a ciphertext $\ct$ output by $\Enc$ is required to be classical. 
\end{definition}

\begin{definition}[QCCC Public Key Encryption (QCCC PKE)~\cite{STOC:KhuTom24}]\label{def:QCCC_PKE}
    A QCCC PKE scheme 
    is a tuple $(\Gen,\Enc,\Dec)$ of QPT algorithms with the following syntax:
    \begin{itemize}
        \item $\Gen(1^\secp)\to(\pk,\sk)$: A key generation algorithm takes the security parameter $1^\secp$ as input and outputs a 
        classical public key $\pk$ and 
        classical secret key $\sk$.
        \item $\Enc(\pk,\msg)\to \ct$: An encryption algorithm takes a public key $\pk$ and a message $\msg\in\bit^*$ as input and outputs a classical ciphertext $\ct$.
        \item $\Dec(\sk,\ct)\to \msg'$: A decryption algorithm takes a secret key $\sk$ and a ciphertext $\ct$ as input and outputs a message $\msg'\in\bit^*$.
    \end{itemize}
    We require the following correctness and  IND-CPA security:
    \begin{itemize}
        \item \textbf{Correctness:}  For all $\msg\in \bit^*$ of polynomial length in $\secp$, 
        \begin{align}
            \Pr \left[\msg'=\msg:
            \begin{gathered}
                (\pk,\sk)\gets\Gen(1^\secp) \\ 
                \ct\gets\Enc(\pk,\msg) \\
                \msg'\gets\Dec(\sk,\ct)
            \end{gathered}
            \right] \ge 1-\negl(\secp).
        \end{align}
        \item \textbf{IND-CPA Security:} For a security parameter $\secp\in\N$ and a bit $b\in\bit$, consider the following game between a challenger and an adversary $\cA$:
        \begin{enumerate}
            \item The challenger runs $(\pk,\sk)\gets\Gen(1^\secp)$ and sends $\pk$ to $\cA$. 
            \item $\cA$ chooses $\msg_0,\msg_1\in(\bit^*)^2$ of the same length and sends them to the challenger.
            \item The challenger runs $\ct_b\gets\Enc(\sk,\msg_b)$ and sends $\ct_b$ to $\cA$.
            \item $\cA$ outputs $b'$.
        \end{enumerate}
        We say that a QCCC PKE scheme satisfies the IND-CPA security if for any QPT adversary $\cA$,
        \begin{align}
            | \Pr[b'=1|b=1]-\Pr[b'=1|b=0] | \le\negl(\secp).
        \end{align}
    \end{itemize}
\end{definition}

We omit definitions of other cryptographic primitives, such as EV-OWPuzzs~\cite{C:ChuGolGra24}, OWPuzzs~\cite{STOC:KhuTom24}, OWSGs~\cite{TQC:MorYam24}, QEFID pairs~\cite{C:ChuGolGra24}, PRUs~\cite{C:JiLiuSon18}, and EFI pairs~\cite{ITCS:BCQ23}, since we obtain them only as corollaries. For their definitions, we refer the reader to the respective cited works. 
One remark regarding the definition of OWSGs is that, unless stated otherwise, we refer to OWSGs with mixed-state outputs as defined in~\cite{TQC:MorYam24}. 

\subsection{Complexity Theory}
Here we explain basic complexity classes we use.
\begin{definition}[i.o.BQP]
    A promise problem $\Pi=(\Pi_{yes},\Pi_{no})$ is in $\mathsf{i.o.BQP}$ if there exist 
    a QPT algorithm $Q$ and infinitely many $\secp\in\N$ such that
    for all $x\in \Pi_{yes}\cup\Pi_{no}$,
    \begin{itemize}
        \item if $x\in\Pi_{yes}\cap\bit^\secp$, then $\Pr[1\gets Q(x)]\ge 2/3$.
        \item if $x\in\Pi_{no}\cap\bit^\secp$, then $\Pr[1\gets Q(x)]\le 1/3$.
    \end{itemize}
\end{definition}

\begin{definition}[UP]
   A promise problem $\Pi=(\Pi_{yes},\Pi_{no})$ is in $\mathsf{UP}$ if there exist a classical polynomial-time (deterministic) Turing machine $M$ and a polynomial $m$ such that
    \begin{itemize}
        \item if $x\in\Pi_{yes}$, then there exists a unique $w\in\bit^{m(|x|)}$ such that $M(x,w)=1$.
        \item if $x\in\Pi_{no}$, then for all $w\in\bit^{m(|x|)}$, $M(x,w)=0$.
    \end{itemize}
\end{definition}

The following lemma follows from the Valiant-Vazirani theorem \cite{VV86}.
\begin{lemma}\label{cor:VV}
    If $\mathsf{NP}\nsubseteq\mathsf{i.o.BQP}$, then $\mathsf{UP}\nsubseteq\mathsf{i.o.BQP}$.
\end{lemma}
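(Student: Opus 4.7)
The plan is to prove the contrapositive: assume $\mathsf{UP}\subseteq\mathsf{i.o.BQP}$ and deduce $\mathsf{NP}\subseteq\mathsf{i.o.BQP}$. Fix $L\in\mathsf{NP}$ with verifier $V$ and witness length $m(n)=\poly(n)$. By the Valiant--Vazirani theorem, there is a randomized polynomial-time procedure that on input $x\in\{0,1\}^n$ samples $k\in[m(n)]$ and a pairwise-independent hash $h$, and outputs a triple $(x,h,k)$ with the property that if $x\in L$ then with probability $\Omega(1/m(n))$ there is a \emph{unique} $w$ satisfying $V(x,w)=1$ and $h(w)=0^k$, while if $x\notin L$ no such $w$ ever exists. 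These yes/no cases define a promise problem $\Pi\in\mathsf{UP}$. To permit length alignment with the hypothesized $\mathsf{i.o.BQP}$ algorithm, also consider its padded version $\Pi^\ast$ whose instances are $(x,h,k,1^t)$ for arbitrary $t\ge 0$ (the padding is ignored by the verifier); clearly $\Pi^\ast\in\mathsf{UP}$ too.

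By the hypothesis there is a QPT algorithm $Q$ deciding $\Pi^\ast$ correctly on every input of length lying in some infinite set $T\subseteq\mathbb{N}$. Let $p(n)$ be a polynomial upper bound on the length of the unpadded VV output on $n$-bit inputs, and for each $\nu\in T$ define $n_\nu:=\max\{n:p(n)\le\nu\}$. Then $p(n_\nu)\le\nu\le p(n_\nu+1)\le \poly(n_\nu)$, and the set $S:=\{n_\nu:\nu\in T\}$ is infinite since $\nu\to\infty$ forces $n_\nu\to\infty$ and each $n$ is hit by only finitely many $\nu$. The $\mathsf{i.o.BQP}$ algorithm for $L$ on input $x$ of length $n$ proceeds as follows: for every integer $\nu\in[p(n),p(n+1)]$, repeat the VV reduction $O(m(n)^2)$ times, pad each triple to length exactly $\nu$, and query $Q$; whenever $Q$ answers ``yes'', use the standard bit-by-bit self-reducibility of $\mathsf{UP}$ (implemented by additional length-$\nu$ padded queries to $Q$ on $\Pi^\ast$-instances whose hash is composed with functions that successively constrain the witness prefix) to extract a candidate $w$, and check $V(x,w)=1$. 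Accept iff some extracted $w$ verifies. For $n\in S$, the distinguished $\nu\in T$ with $n_\nu=n$ lies in the window and $Q$'s answers on length-$\nu$ queries are reliable, so amplification guarantees that a true witness is found when $x\in L$; conversely, for $x\notin L$ no extracted string can satisfy $V$, so the erratic behaviour of $Q$ on lengths outside $T$ is harmlessly filtered out. Since $S$ is infinite, this shows $L\in\mathsf{i.o.BQP}$.

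The main obstacle is the possibly severe sparsity of $T$: naively padding to the smallest $\nu\in T$ above $p(n)$ could be super-polynomial in $n$, ruining the time bound. The polynomial-window trick above circumvents this by observing that each $\nu\in T$ sits inside the window $[p(n_\nu),p(n_\nu+1)]$, whose width is polynomial in $n_\nu$; on input of size $n_\nu$ the algorithm therefore exhausts every candidate padding length in the window in polynomial time, and is guaranteed to hit the right $\nu$. A secondary issue—spurious ``yes'' answers from $Q$ at lengths outside $T$—is resolved by insisting on an NP-verifiable witness, a check that cannot succeed for $x\notin L$ regardless of $Q$'s errors.
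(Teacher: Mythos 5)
The paper does not include a written proof of this lemma; it simply remarks that the statement follows from the Valiant--Vazirani theorem. Your argument is correct, and it supplies exactly the extra scaffolding that an infinitely-often version of the VV reduction needs: the padded promise problem $\Pi^\ast$ (so that a single good length can serve a whole range of source lengths), the exhaustive search over the polynomial window $[p(n),p(n+1)]$ of candidate padding lengths (so the reduction hits the unknown good length $\nu\in T$ in polynomial time without knowing $T$), and bit-by-bit witness extraction followed by an $\mathsf{NP}$ check, which renders harmless the uncontrolled behaviour of $Q$ at lengths outside $T$. Two small points are worth making explicit in a full write-up: (i) the self-reducibility queries must themselves be instances of the same padded $\mathsf{UP}$ problem, so $\Pi$ should be defined with a hash class closed under appending the affine prefix-constraints you use (equivalently, the current prefix should be carried in the instance), and one should verify these constrained instances stay inside the promise whenever the base VV triple does; and (ii) the polynomial $p$ must upper-bound the unpadded lengths of these extraction queries as well, not only of the bare VV triples, so that every query can be padded up to $\nu$. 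Both are routine and do not affect the structure of the argument.
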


\if0
\begin{definition}[Quantum-Sampler One-Way Puzzles~\cite{cryptoeprint:2023/1620}]
A quantumly-secure (resp. classically-secure) quantum-sampler one-way puzzle is a pair of sampling and verification algorithms $(\Samp, \Ver)$ such that
\begin{itemize}
\item 
$\Samp(1^\secp)\to (k,s):$
It is a QPT algorithm that, on input the security parameter $\secp$, 
outputs a pair of classical strings $(k,s)$. We refer to $s$ as
the puzzle and $k$ as its key. 
Without loss of generality we may assume that $k\in\bit^\secp$.
\item
$\Ver(k,s)\to\top/\bot:$
It is an unbounded algorithm that, on input any pair of classical strings $(k,s)$, halts
and outputs either $\top/\bot$.
\end{itemize}
These satisfy the following properties.
\paragraph{Correctness.}
\begin{align}
\Pr[\top\gets\Ver(k,s):
(k,s)\gets\Samp(1^\secp)]
\ge 1-\negl(\secp).
\end{align}
\paragraph{Security.}
For any QPT (resp. PPT) adversary $\cA$,
\begin{align}
\Pr[\top\gets\Ver(\cA(s),s):
(k,s)\gets \Samp(1^\secp)] \le \negl(\secp).
\end{align}
\end{definition}

\begin{remark}
Quantumly-secure quantum-sampler one-way puzzles are simply called one-way puzzles in \cite{cryptoeprint:2023/1620}. We add ``quantumly-secure" since we also consider its classically-secure variant and add ``quantum-sampler" to emphasize that $\Samp$ is QPT. 
\end{remark}
\fi

\section{Definitions of Quantum Obfuscation}
We introduce definitions of quantum indistinguishability obfuscation for classical circuits and its variants. 
\begin{definition}[Quantum iO for Classical Circuits]\label{def:qiO}
   A quantum indistinguishability obfuscator (quantum iO) for classical circuits consists of two QPT algorithms $(\Obf,\Eval)$ with the following syntax:
    \begin{itemize}
        \item $\Obf(1^\secp,C)\to\obfC$: An obfuscation algorithm takes the security parameter $1^\secp$ and a classical circuit $C$ as input and outputs a quantum state $\obfC$, which we refer to as an obfuscated encoding of $C$.  
        \item $\Eval(\obfC,x)\to y$: An evaluation algorithm takes an obfuscated encoding $\obfC$ and a classical input $x$ as input and outputs a classical output $y$.
    \end{itemize}
 We require the following correctness and security.
    \begin{itemize}
        \item \textbf{Correctness:} For any family 
        $\{C_\secp\}_{\secp\in \mathbb{N}}$ of polynomial-size classical circuits of input length $n_\secp$, and for any polynomial $p$, there exists $N\in\N$ such that 
        \begin{align}
            \Pr_{\obfC_\secp\gets \Obf(1^\secp,C_\secp)}\left[
            \forall x\in \bit^{n_\secp}, 
            \Pr[\Eval(\obfC_\secp,x)=C_\secp(x)]\ge 1-\frac{1}{p(\secp)}\right] \ge 1-\frac{1}{p(\secp)}
        \end{align}
        holds for all $\secp\ge N$,
        where the inner probability is taken over the randomness of the execution of $\Eval(\obfC_\secp,x)$. 
        \item \textbf{Security:} 
        For any families 
        $\{C_{0,\secp}\}_{\secp\in \mathbb{N}}$ and
         $\{C_{1,\secp}\}_{\secp\in \mathbb{N}}$
        of polynomial-size classical circuits such that $C_{0,\secp}$ and $C_{1,\secp}$ are functionally equivalent and of the same size, and 
        for any non-uniform QPT adversary $\cA$,   
        \begin{align}
        | \Pr[1\gets\cA(1^\secp,\Obf(1^\secp,C_{0,\secp}))] - \Pr[1\gets\cA(1^\secp,\Obf(1^\secp,C_{1,\secp}))] | \le \negl(\secp).
    \end{align}
    \end{itemize}
\end{definition}
\begin{remark} 
The correctness notion defined above may seem strong, as it requires that the evaluation returns the correct output for all inputs simultaneously with overwhelming probability. However, this stronger guarantee can be generically achieved assuming only a quantum iO with a weaker, input-wise correctness—that is, for each fixed input, the evaluation returns the correct output with overwhelming probability. To achieve the stronger correctness, we can simply repeat the evaluation algorithm multiple times and take a majority vote. By repeating sufficiently many times, the failure probability for each input can be reduced exponentially, and a union bound then ensures that the overall probability of failing on any input remains negligible. This argument closely parallels the classical case presented in~\cite[Appendix B]{FOCS:KMNPRY14}, and thus we omit the details. 
\end{remark}
\begin{remark}
We require the security of iO to hold against non-uniform quantum QPT adversaries, even though our final goal is to construct cryptographic primitives with uniform security. This is because a uniform version of the security, where a uniform QPT adversary $\cA$ chooses two functionally equivalent circuits $C_0$ and $C_1$ and then tries to distinguish obfuscations of them, would not suffice for our purpose, since we must consider a reduction algorithm that hardwires an arbitrary choice of an $\mathsf{NP}$ instance into the circuits. 
Although this point is not explicitly discussed, we believe the same applies even in the classical setting of~\cite{FOCS:KMNPRY14}. 
In addition, we note that the non-uniform security notion aligns more closely with the standard formalization in the literature on iO. 
\end{remark}

\begin{definition}[Variations of Quantum iO]\label{def:variant_qiO}
For $(\mathtt{X},\mathtt{Y},\mathtt{Z})\in \{\QQ,\CC\}^3$, $(\mathtt{X},\mathtt{Y},\mathtt{Z})$-iO for classical circuits is defined similarly to quantum iO for classical circuits as defined in \Cref{def:qiO} except that:
\begin{itemize}
    \item If $\mathtt{X}=\CC$, $\Obf$ is a PPT algorithm whereas if $\mathtt{X}=\QQ$,  $\Obf$  is a QPT algorithm;
    \item If $\mathtt{Y}=\CC$, $\Eval$ is a PPT algorithm whereas if $\mathtt{Y}=\QQ$,  $\Eval$  is a QPT algorithm;
       \item If $\mathtt{Z}=\CC$, an encoding $\obfC$ output by $\Obf$ is a classical string whereas if $\mathtt{Z}=\QQ$,  $\obfC$  is a quantum state. 
\end{itemize}
\end{definition}
\begin{remark}
While there are $8$ possible choices for $(\mathtt{X},\mathtt{Y},\mathtt{Z})$, some of them are meaningless.  In particular, it makes sense to have $\mathtt{Z}=\QQ$ only if $\mathtt{X}=\mathtt{Y}=\QQ$ since classical $\Obf$ cannot output quantum $\obfC$ and classical $\Eval$ cannot take quantum $\obfC$ as input. Thus, there are $5$ meaningful choices: $(\QQ,\QQ,\QQ)$,  $(\QQ,\QQ,\CC)$, $(\QQ,\CC,\CC)$, $(\CC,\QQ,\CC)$, and $(\CC,\CC,\CC)$ (see \Cref{figure} for their relationship).
$(\QQ,\QQ,\QQ)$-iO corresponds to quantum iO as defined in \Cref{def:qiO} and $(\CC,\CC,\CC)$-iO corresponds to (post-quantum) classical iO. 
We stress that we consider obfuscation of \emph{classical} circuits and security against \emph{quantum} adversaries in all the variant.  
\end{remark}

In the security definition of iO, the two circuits are required to be functionally equivalent, that is, they must agree on all inputs. The notion of differing-inputs obfuscation (diO)~\cite{JACM:BGIRSVY12,EPRINT:ABGSZ13,TCC:BoyChuPas14} relaxes this requirement by allowing circuits that may differ on some inputs, as long as those inputs are hard to find. This results in a strictly stronger security notion than iO. It is known that in the classical setting, iO and diO are equivalent when the number of differing inputs is polynomial. We observe that this equivalence extends to the quantum setting as well. For our purposes, we present the definition of diO in the case where there is only a single differing input, which suffices for our applications.

\begin{definition}[Single-Point Differing-Inputs Obfuscation (diO)]\label{def:diO}  
For $(\mathtt{X},\mathtt{Y},\mathtt{Z})\in \{\QQ,\CC\}^3$, $(\mathtt{X},\mathtt{Y},\mathtt{Z})$-single-point diO for classical circuits is defined similarly to $(\mathtt{X},\mathtt{Y},\mathtt{Z})$-iO except that the security is replaced with extractability defined as follows: 
\begin{itemize}
     \item \textbf{Extractability (for single-differing-point):} 
     For any QPT adversary $\cA$ and any polynomial $p$, there exist a QPT algorithm $\Ext$ and a polynomial $q$ for which the following holds. 
    For any pair of families of polynomial-size classical circuits
        $\{C_{0,\secp}\}_{\secp\in \mathbb{N}}$ and
         $\{C_{1,\secp}\}_{\secp\in \mathbb{N}}$, such that for each $\secp$, $C_{0,\secp}$ and $C_{1,\secp}$ 
        have the same size and input length, and differ on at most a single input,       
        and for any family of polynomial-size classical strings $\{z_\secp\}_{\secp\in \mathbb{N}}$, 
        the following holds for all sufficiently large $\secp\in\mathbb{N}$: 
    \begin{align}
        & \Pr \left[ b'=b : 
        \begin{gathered}
            b\gets\bit \\
            \obfC_\secp\gets\Obf(1^\secp,C_{b,\secp}) \\
            b'\gets\cA(1^\secp,\obfC_\secp,C_{0,\secp},C_{1,\secp},z_\secp)
        \end{gathered}
        \right] \ge \frac{1}{2}+\frac{1}{p(\secp)} \\ 
        &\quad \Longrightarrow 
        \Pr\left[ C_{0,\secp}(x)\neq C_{1,\secp}(x) : x\gets\Ext(1^\secp,C_{0,\secp},C_{1,\secp},z_\secp) \right] \ge \frac{1}{q(\secp)}. 
    \end{align}
\end{itemize}
\end{definition}
\begin{remark}
Zhandry~\cite{C:Zhandry23} observed that defining diO involves subtle challenges when considering security against quantum adversaries with quantum advice. In contrast, we restrict our attention to quantum adversaries with classical advice. As a result, these complications do not arise in our setting, allowing us to define diO in a manner that closely mirrors the classical definition from~\cite{TCC:BoyChuPas14}. 
\end{remark}
In the classical advice setting considered above, the equivalence between iO and single-point diO can be proven using essentially the same argument as in the classical case, as shown in~\cite{TCC:BoyChuPas14}.
\begin{lemma}\label{lem:diO}
For $(\mathtt{X},\mathtt{Y},\mathtt{Z})\in \{\QQ,\CC\}^3$, if $(\Obf,\Eval)$ is an $(\mathtt{X},\mathtt{Y},\mathtt{Z})$-iO for classical circuits, then it is also $(\mathtt{X},\mathtt{Y},\mathtt{Z})$-single-point diO for classical circuits.
\end{lemma}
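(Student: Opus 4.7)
The plan is to quantize the classical argument of Boyle, Chung, and Pass~\cite{TCC:BoyChuPas14} for single-differing-input diO, relying on the fact that the adversary's auxiliary input $z_\secp$ is classical so that $\cA$ can be invoked on fresh obfuscations independently many times. Fix any non-uniform QPT distinguisher $\cA$ and any polynomial $p$ such that the extractability hypothesis is satisfied with advantage $1/p(\secp)$. Writing $\eps := 2/p(\secp)$, this hypothesis is equivalent to
\begin{align}
\Delta := \big|\Pr[1 \gets \cA(1^\secp,\Obf(1^\secp,C_{1,\secp}),\cdot)] - \Pr[1 \gets \cA(1^\secp,\Obf(1^\secp,C_{0,\secp}),\cdot)]\big| \ge \eps,
\end{align}
where the dot denotes the common classical arguments $(C_{0,\secp},C_{1,\secp},z_\secp)$ suppressed for brevity. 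Functional equivalence of $C_{0,\secp}$ and $C_{1,\secp}$ would force $\Delta$ to be negligible by iO security, so we may assume that $C_{0,\secp}$ and $C_{1,\secp}$ differ on exactly one input $x^* \in \bit^{n_\secp}$; the extractor's task reduces to outputting $x^*$.

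The first step is to introduce threshold hybrids. Viewing inputs in $\bit^{n_\secp}$ as integers in $\{0, \ldots, 2^{n_\secp}-1\}$, for each $y \in \{0, \ldots, 2^{n_\secp}\}$ define
\begin{align}
C^{(y)}(x) := \begin{cases} C_{0,\secp}(x) & \text{if } x < y, \\ C_{1,\secp}(x) & \text{if } x \ge y, \end{cases}
\end{align}
padded to a common polynomial size matching that of suitably padded copies $\wtl{C}_{0,\secp}$ and $\wtl{C}_{1,\secp}$ of the two input circuits. Since $C_{0,\secp}(x) = C_{1,\secp}(x)$ for every $x \neq x^*$, a direct case analysis shows that $C^{(y)}$ is functionally equivalent to $\wtl{C}_{1,\secp}$ whenever $y \le x^*$ and to $\wtl{C}_{0,\secp}$ whenever $y > x^*$. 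By the iO security of $(\Obf,\Eval)$ applied independently to each such functionally equivalent pair, the function
\begin{align}
f(y) := \Pr[1 \gets \cA(1^\secp,\Obf(1^\secp,C^{(y)}),C_{0,\secp},C_{1,\secp},z_\secp)]
\end{align}
satisfies $|f(y) - f(0)| \le \negl(\secp)$ for all $y \le x^*$ and $|f(y) - f(2^{n_\secp})| \le \negl(\secp)$ for all $y > x^*$, while $|f(2^{n_\secp}) - f(0)| \ge \eps - \negl(\secp)$. Hence the unique non-negligible jump in the sequence $f(0), f(1), \ldots, f(2^{n_\secp})$ occurs between $y = x^*$ and $y = x^* + 1$.

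The extractor $\Ext$ locates this jump by binary search. After a one-time estimate $\hat f_0$ of $f(0)$ obtained by averaging $T := \Theta(n_\secp/\eps^2)$ independent runs of $\cA$ on fresh obfuscations, it maintains an interval $[L,R]$ with invariant $L \le x^* < R$, initialized to $L = 0$ and $R = 2^{n_\secp}$. At each of the $n_\secp$ iterations it sets $M := \lfloor (L+R)/2 \rfloor$, forms an estimate $\hat f(M)$ using $T$ samples, and updates $L \gets M$ if $|\hat f(M) - \hat f_0| \le \eps/2$ (the case $M \le x^*$) and $R \gets M$ otherwise, outputting $L$ at the end. A Chernoff bound together with a union bound over the $O(n_\secp)$ probability estimates guarantees that all estimates are within $\eps/4$ of their true values except with probability at most $1/2$; on this good event every branching decision is correct and the final $L$ equals $x^*$, so $\Ext$ succeeds with probability at least $1/2$ and one may take $q(\secp) := 3$.

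The main technical obstacle is the error budget: one must choose $T$ so that after a union bound over the $O(n_\secp)$ estimates every branching decision is correct with constant probability while keeping $\Ext$ polynomial-time, and one must ensure that the only polynomially-many applications of iO security (one per hybrid visited by the binary search) each incur only a negligible loss that can be absorbed in the $\eps/4$ slack. A minor point to verify is that all hybrids $C^{(y)}$, as well as $\wtl C_{0,\secp}$ and $\wtl C_{1,\secp}$, can be realized by classical circuits of a common polynomial size, which is immediate since $n_\secp$-bit threshold comparison has polynomial-size classical circuits. No quantum-specific complication arises beyond classical BCP14 because the classical auxiliary input makes each invocation of $\cA$ fully independent.
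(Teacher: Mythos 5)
The paper gives no proof of \Cref{lem:diO}; it simply cites~\cite{TCC:BoyChuPas14} and asserts that the classical argument carries over because the auxiliary input is classical. Your binary-search/threshold-hybrid argument is indeed the BCP14-style proof, so the overall approach is the intended one and your observation that classical advice is what lets $\cA$ be rerun on fresh obfuscations is exactly the point made in the paper's surrounding remark. However, there are two concrete issues worth fixing.

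First, a gap around circuit sizes. The threshold hybrid $C^{(y)}$ necessarily has size $S'$ strictly larger than $S = |C_{0,\secp}| = |C_{1,\secp}|$ (it must evaluate both input circuits plus an $n_\secp$-bit comparison), so $C^{(0)}$ and $C^{(2^{n_\secp})}$ are size-$S'$ circuits that merely \emph{compute} the same functions as $C_{1,\secp}$ and $C_{0,\secp}$. Your key inequality $|f(2^{n_\secp}) - f(0)| \ge \eps - \negl(\secp)$ is never established: the extractability hypothesis bounds $\cA$'s advantage on $\Obf(1^\secp,C_{b,\secp})$ (size $S$), while $f(0), f(2^{n_\secp})$ concern $\Obf(1^\secp,C^{(y)})$ (size $S'$), and iO security in \Cref{def:qiO} only relates obfuscations of functionally equivalent circuits \emph{of the same size}, so it cannot bridge $\Obf(C_{1,\secp})$ to $\Obf(C^{(0)})$. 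Without this bridge the binary search has no guaranteed non-negligible gap to home in on. The standard repair is to prove the lemma for circuits that already carry enough padding to absorb the threshold logic, so that all hybrids (including the endpoints) can be realized at the common size $S$, and to account for this padding wherever the lemma is invoked; alternatively, restate extractability so that the hypothesis is about the appropriately padded circuits. Your remark that "all hybrids can be realized at a common polynomial size" addresses the sizing of the hybrids among themselves but not their sizing relative to the input circuits, which is where the hypothesis lives.

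Second, a quantitative slip in the error budget. With per-estimate accuracy $\eps/4$ and decision threshold $\eps/2$, the case $M \le x^*$ gives $|\hat f(M) - \hat f_0| \le \eps/2 + \negl(\secp)$ while the case $M > x^*$ gives $|\hat f(M) - \hat f_0| \ge \eps/2 - 2\negl(\secp)$; both sit essentially at the threshold, so the branching rule is not guaranteed to be correct even on the good event. Tightening the accuracy to, say, $\eps/8$ (still achievable with $T = \Theta(n_\secp/\eps^2)$ samples) separates the two cases by a $\Theta(\eps)$ margin and fixes this.
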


\section{Main Technical Theorem}
We prove a technical theorem that is the basis of all our cryptographic implications. 

Let $s_{min}$ be a polynomial such that, 
for any $m\in \mathbb{N}$ and any $k\in \bit^m$, 
there exist classical circuits of size at most $s_{min}(m)$ that compute the following functions on $m$-bit inputs:
\begin{itemize}
    \item the point function at target point $k$, which outputs $1$ on input $k$ and $0$ on all other inputs; and 
    \item the zero function, which outputs 0 on all $m$-bit inputs. 
\end{itemize} 
For 
$m\in \mathbb{N}$, 
$k\in \bit^m$, and $s\ge s_{min}(m)$, let $P_{k,s}$ denote a \emph{canonical} classical circuit of size $s$ that computes the point function on the target point $k$, and
let $Z_{m,s}$ be a \emph{canonical} classical circuit of size $s$ that computes the zero-function on $m$-bit inputs. 
Here, ``canonical'' refers to a fixed but arbitrary choice of circuit construction, provided that the descriptions of $P_{k,s}$ and $Z_{m,s}$ are computable in classical polynomial time from $(k,1^s)$ and $(1^m,1^s)$,  respectively. The specific choice of canonical circuits does not affect our results.


Then we prove the following theorem. 
\begin{theorem}\label{thm:main}
    Suppose $\mathsf{NP}\nsubseteq\mathsf{i.o.BQP}$ and 
    $(\Obf,\Eval)$ is an $(\mathtt{X},\mathtt{Y},\mathtt{Z})$-iO for classical circuits for $(\mathtt{X},\mathtt{Y},\mathtt{Z})\in \{\QQ,\CC\}^3$. 
Then, there are classical-polynomial-time-computable polynomials $m$ and $s$  such that 
    for any polynomial $\ell$, 
    the following two distributions (over classical bit strings if $\mathtt{Z}=\CC$ and over quantum states if $\mathtt{Z}=\QQ$) are 
    computationally indistinguishable against uniform QPT adversaries:
    \begin{itemize}
        \item $\cD_0(\secp)$: Sample $k\gets\bit^{m(\secp)}$, 
        run $\hat{P}_{k,s(\secp)}^i\gets \Obf(1^\secp,P_{k,s(\secp)})$ for $i\in [\ell(\secp)]$,  
        and output $(1^\secp,\hat{P}_{k,s(\secp)}^1,\hat{P}_{k,s(\secp)}^2,\ldots,\hat{P}_{k,s(\secp)}^{\ell(\secp)})$. 
         \item $\cD_1(\secp)$: Run $\hat{Z}_{m(\secp),s(\secp)}^i\gets \Obf(1^\secp,Z_{m(\secp),s(\secp)})$ for $i\in [\ell(\secp)]$ 
        and output $(1^\secp,\hat{Z}_{m(\secp),s(\secp)}^1,\hat{Z}_{m(\secp),s(\secp)}^2,\ldots,\hat{Z}_{m(\secp),s(\secp)}^{\ell(\secp)})$. 
        \end{itemize}
\end{theorem}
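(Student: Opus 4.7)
The plan is to prove the contrapositive: a uniform QPT distinguisher $\cA$ that separates $\cD_0$ from $\cD_1$ with advantage $1/p(\secp)$ for infinitely many $\secp$ yields a uniform QPT algorithm deciding a fixed $\mathsf{UP}$-complete language $L$ on infinitely many input lengths, contradicting $\mathsf{UP}\nsubseteq\mathsf{i.o.BQP}$ (which follows from $\mathsf{NP}\nsubseteq\mathsf{i.o.BQP}$ via \Cref{cor:VV}). Fix such an $L$ with verifier $M$ and witness length $m'$, and choose the theorem's polynomials so that $m(\secp)\ge m'(\secp)$ (padding witnesses if needed) and $s(\secp)$ dominates the sizes of $P_{k,s(\secp)}$, $Z_{m(\secp),s(\secp)}$, and the shifted verifier circuit
\begin{align}
V[x,v](z)\;\coloneqq\;M(x,\,v\oplus z)
\end{align}
for any $v\in\bit^{m(\secp)}$. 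When $x$ is a yes-instance with unique witness $w$, $V[x,v]$ is functionally equivalent to the point function $P_{w\oplus v,s(\secp)}$ and differs from $Z_{m(\secp),s(\secp)}$ on exactly the single input $w\oplus v$.

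On input $x$ the reduction samples $v\gets\bit^{m(\secp)}$ uniformly. A standard $\ell$-fold iO hybrid applied to the functionally equivalent pair $(V[x,v],P_{w\oplus v,s(\secp)})$, invoking non-uniform iO security with the classical advice $(x,v,w)$ at each hybrid step, yields
\begin{align}
\bigl(\Obf(1^\secp,V[x,v])\bigr)^{\otimes \ell(\secp)}\;\cind\;\bigl(\Obf(1^\secp,P_{w\oplus v,s(\secp)})\bigr)^{\otimes \ell(\secp)}.
\end{align}
Since $v$ is uniform and independent of $w$, the right-hand side is distributed exactly as $\cD_0(\secp)$ with $k=w\oplus v$, so $\cA$ distinguishes $(\Obf(V[x,v]))^{\otimes \ell}$ from $\cD_1(\secp)=(\Obf(Z_{m,s}))^{\otimes \ell}$ with advantage $1/p(\secp)-\negl(\secp)$ on infinitely many $\secp$. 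A second hybrid over the $\ell$ output positions, with the hybrid index sampled uniformly to preserve the reduction's uniformity, compresses $\cA$ into a single-sample uniform QPT distinguisher $\cB$ for $\Obf(V[x,v])$ versus $\Obf(Z_{m,s})$ of advantage at least $1/(p(\secp)\ell(\secp))-\negl(\secp)$. Since these two circuits differ on exactly one input, \Cref{lem:diO} supplies a uniform QPT extractor $\Ext$ and polynomial $q$ such that $\Ext(1^\secp,V[x,v],Z_{m,s},v)$ outputs $w\oplus v$ with probability at least $1/q(\secp)$; the reduction then recovers $w=v\oplus(w\oplus v)$, verifies with $M$, and amplifies by polynomially many repetitions to obtain the required $\mathsf{i.o.BQP}$ decider for $L$ (yes-instances succeed with overwhelming probability, no-instances admit no accepting witness).

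The main obstacle is the multi-sample structure: $\cD_0$ reuses the same hidden key $k$ across all $\ell$ copies, so indistinguishability cannot be reached by a one-$k$-at-a-time hybrid. The resolution is that the ``key-rerandomization'' bridge, from the hidden $k$ to the structured circuit $V[x,v]$, is carried out between \emph{functionally equivalent} circuits, so plain iO security composes cleanly to $\ell$ samples; diO is invoked only in the subsequent single-sample step for the pair $(V[x,v],Z_{m,s})$, whose unique differing input is exactly the shifted witness we want. A secondary bookkeeping point is that the hardwired advice $(x,v,w)$ is used only inside the non-uniform iO hybrid analysis, whereas the outer $\mathsf{i.o.BQP}$ algorithm that runs $\cB$ and $\Ext$ and amplifies remains fully uniform; consequently the argument is agnostic to the choice of $(\mathtt{X},\mathtt{Y},\mathtt{Z})$, using iO and diO purely as black boxes.
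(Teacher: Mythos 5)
Your proof is correct and follows essentially the same route as the paper: reduce $\mathsf{NP}\nsubseteq\mathsf{i.o.BQP}$ to $\mathsf{UP}\nsubseteq\mathsf{i.o.BQP}$ via Valiant--Vazirani, introduce the shifted verifier circuit $V[x,v](z)=M(x,v\oplus z)$, use non-uniform iO security to replace the $\ell$ obfuscations of $P_{w\oplus v,s}$ (which together are exactly $\cD_0$) by $\ell$ obfuscations of $V[x,v]$, average over $v$ to isolate a polynomial fraction of ``good'' $v$'s, and finally invoke single-point diO extractability (\Cref{lem:diO}) on the pair $(V[x,v],Z_{m,s})$ to recover $w\oplus v$ and hence $w$.

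The one substantive difference is how the diO extractor is invoked. The paper feeds all $\ell$ obfuscations into a single multi-sample algorithm $\cE$ and then applies \Cref{lem:diO} directly, implicitly treating the extractability guarantee as holding for multi-sample distinguishers; strictly read, \Cref{def:diO} is phrased for a single obfuscated sample, so there is an unstated step (either re-derive the extractor for $\ell$-sample distinguishers, or hybrid down). You make this explicit: a second hybrid across the $\ell$ positions (with the hybrid index sampled internally so the outer reduction stays uniform) yields a single-sample distinguisher with advantage $1/(p\ell)-\negl$, to which \Cref{def:diO} applies verbatim. This costs a polynomial factor $\ell$ in the advantage, which is absorbed by the final amplification, and arguably makes the invocation of \Cref{lem:diO} cleaner. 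Two minor points to tighten: you should make the ``good $v$'' averaging argument explicit before applying the extractor (the extractor's guarantee is per-instance, so you need a polynomial fraction of $v$'s on which the single-sample advantage exceeds a fixed inverse polynomial, exactly as the paper does with $\Good_{\secp,x}$), and you should note that $m$ and $s$ must be arranged to be classical-polynomial-time-computable (by padding) so that the samplers $\cD_0,\cD_1$ are well-defined; both are easy fixes and do not affect the soundness of the argument.
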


\begin{proof}[Proof of \cref{thm:main}]
    By \cref{cor:VV}, we have $\mathsf{UP}\nsubseteq\mathsf{i.o.BQP}$.
    Then, there exists a promise problem $\Pi=(\Pi_{yes},\Pi_{no})\in\mathsf{UP}$ such that $\Pi\notin\mathsf{i.o.BQP}$.
    By the definition of $\mathsf{UP}$, there exist a polynomial $m$ and a deterministic polynomial-time Turing machine $M$ such that
    \begin{itemize}
        \item If $x\in\Pi_{yes}$, then there exists a unique $w\in\bit^{m(|x|)}$ such that $M(x,w)=1$.
        \item If $x\in\Pi_{no}$, $M(x,w)=0$ for any $w\in\bit^{m(|x|)}$.
    \end{itemize}
    Without loss of generality, we can assume that $m$ is classical-polynomial-time-computable because we can pad $w$ so that its length matches (an upper bound of) the running time of $M$.

    For the sake of contradiction, let us assume that for any classical-polynomial-time-computable polynomial $s$, the following holds:
    There exist polynomials $\ell$ and $p$, and a uniform QPT algorithm $\cA$ such that
    \begin{align}\label{eq:assump_main}
        \Pr_{k\gets\bit^{m(\secp)}} [ 1\gets\cA(1^\secp,\hat{P}_{k,s(\secp)}^1,...,\hat{P}_{k,s(\secp)}^{\ell(\secp)}) ] - \Pr [1\gets\cA(1^\secp,\hat{Z}_{m(\secp),s(\secp)}^1,...,\hat{Z}_{m(\secp),s(\secp)}^{\ell(\secp)})] \ge \frac{1}{p(\secp)}
    \end{align}
    holds for infinitely many $\secp$, where $\hat{P}_{k,s(\secp)}^i\gets\Obf(1^\secp,P_{k,s(\secp)})$ and $\hat{Z}_{m(\secp),s(\secp)}^i\gets\Obf(1^\secp,Z_{m(\secp),s(\secp)})$ for each $i\in[\ell(\secp)]$.
    Our goal is to construct a QPT algorithm that solves $\Pi$ for infinitely many input lengths, thereby showing that $\Pi\in\mathsf{i.o.BQP}$.
    To do this, it suffices to show that there exist a QPT algorithm $\cB$ and a polynomial $r$ such that for infinitely many $\secp\in\mathbb{N}$ 
    \begin{align}\label{eq:find_w}
        \Pr[M(x,w)=1:w\gets\cB(x)] \ge \frac{1}{r(\secp)}
    \end{align}
    is satisfied for all $x\in\Pi_{yes}\cap\bit^\secp$.
    Then, a QPT algorithm $\cC$ that on input $x\in\bit^\secp$, runs $w\gets\cB(x)$ and outputs $M(x,w)$ satisfies,
    \begin{itemize}
        \item if $x\in\Pi_{yes}$, $\Pr[1\gets\cC(x)]=\Pr[M(x,w)=1:w\gets\cB(x)]\ge\frac{1}{r(\secp)}$,
        \item if $x\in\Pi_{no}$, $\Pr[1\gets\cC(x)]=\Pr[M(x,w)=1:w\gets\cB(x)]=0$,
    \end{itemize}
    for infinitely many $\secp\in\mathbb{N}$.
    The completeness-soundness gap is $1/r(\secp)=1/\poly(\secp)$ and therefore $\Pi\in\mathsf{i.o.BQP}$. 

    In the remaining part, we show the existence of a polynomial $r$ and a QPT algorithm $\cB$ that satisfy \cref{eq:find_w}.
    There exists a family $\{V_\secp[x,v]\}_{\secp\in\N}$ of polynomial-size classical circuits such that for each $\secp$, $V_\secp[x,v]$ is parametrized by $x\in\bit^\secp$ and $v\in\bit^{m(\secp)}$, operates on $m(\secp)$ bits, and computes the function
    \begin{align}
        V_\secp[x,v](z) := 
        \begin{cases}
            1 & \text{if } M(x,v\oplus z)=1 \\ 
            0 & \text{otherwise},
        \end{cases}
    \end{align}
    where the description of $V_\secp[x,v]$ is computable in classical polynomial time from $(x,v)$.
    Let $s(\secp)$ be the size of $V_\secp[x,v]$. Then, $s$ is classical-polynomial-time-computable because there exists a Turing machine that on input $1^\secp$, computes $1^{m(\secp)}$, computes the description of $V_\secp[1^\secp,1^{m(\secp)}]$ from $(1^\secp,1^{m(\secp)})$, and outputs $s(\secp)=|V_\secp[1^\secp,1^{m(\secp)}]|$.
    We can choose $s$ such that $s(\secp)\ge s_{min}(m(\secp))$ by adding dummy gates that do not change the functionality of $V_\secp[x,v]$. 

    For each $x\in\Pi_{yes}$ and $v\in\bit^{m(|x|)}$, $V_{|x|}[x,v]$ has the same functionality with $P_{w\oplus v,s(|x|)}$, where $w\in\bit^{m(|x|)}$ is the unique witness for $x$.
    
    Then, by the security of iO, for any uniform QPT algorithm $\cA'$ and for any polynomial $b$, there exists $N\in\N$ such that
    \begin{align}\label{eq:ind_V_P}
        &\Bigg| \Pr [ 1\gets\cA'(1^\secp,\hat{V}_{x,v}^1,...,\hat{V}_{x,v}^{\ell(\secp)}) ] - \Pr [1\gets\cA'(1^\secp,\hat{P}_{w\oplus v,s(\secp)}^1,...,\hat{P}_{w\oplus v,s(\secp)}^{\ell(\secp)}) ] \Bigg| \le \frac{1}{b(\secp)}
    \end{align}
    for all $v\in\bit^{m(\secp)}$, all $x\in\Pi_{yes}\cap\bit^\secp$ and all $\secp$ such that $\secp\ge N$,
    where $\hat{V}_{x,v}^i\gets\Obf(1^\secp,V_\secp[x,v])$ and $\hat{P}_{w\oplus v,s(\secp)}^i\gets\Obf(1^\secp,P_{w\oplus v,s(\secp)})$ for each $i\in[\ell(\secp)]$.
    Then, for the QPT algorithm $\cA$ that satisfies \cref{eq:assump_main}, there exist infinitely many $\secp\in\N$ such that for all $x\in\Pi_{yes}\cap\bit^\secp$, 
    \begin{align}
        &\Pr_{v\gets\bit^{m(\secp)}}[ 1\gets\cA(1^\secp,\hat{V}_{x,v}^1,...,\hat{V}_{x,v}^{\ell(\secp)})] - \Pr[ 1\gets\cA(1^\secp,\hat{Z}_{m(\secp),s(\secp)}^1,...,\hat{Z}_{m(\secp),s(\secp)}^{\ell(\secp)}) ] \\  
        &= \Pr_{v\gets\bit^{m(\secp)}}[ 1\gets\cA(1^\secp,\hat{P}_{v,s(\secp)}^1,...,\hat{P}_{v,s(\secp)}^{\ell(\secp)}) ] - \Pr[1\gets\cA(1^\secp,\hat{Z}_{m(\secp),s(\secp)}^1,...,\hat{Z}_{m(\secp),s(\secp)}^{\ell(\secp)})] \\ 
        &\quad + \Pr_{v\gets\bit^{m(\secp)}}[ 1\gets\cA(1^\secp,\hat{V}_{x,v}^1,...,\hat{V}_{x,v}^{\ell(\secp)}) ] - \Pr_{v\gets\bit^{m(\secp)}}[1\gets\cA(1^\secp,\hat{P}_{v,s(\secp)}^1,...,\hat{P}_{v,s(\secp)}^{\ell(\secp)})] \\ 
        &= \Pr_{v\gets\bit^{m(\secp)}}[ 1\gets\cA(1^\secp,\hat{P}_{v,s(\secp)}^1,...,\hat{P}_{v,s(\secp)}^{\ell(\secp)}) ] - \Pr [1\gets\cA(1^\secp,\hat{Z}_{m(\secp),s(\secp)}^1,...,\hat{Z}_{m(\secp),s(\secp)}^{\ell(\secp)})] \\ 
        &\quad + \Pr_{v\gets\bit^{m(\secp)}}[ 1\gets\cA(1^\secp,\hat{V}_{x,v}^1,...,\hat{V}_{x,v}^{\ell(\secp)}) ] - \Pr_{v\gets\bit^{m(\secp)}}[1\gets\cA(1^\secp,\hat{P}_{w\oplus v,s(\secp)}^1,...,\hat{P}_{w\oplus v,s(\secp)}^{\ell(\secp)})] \\ 
        &\ge \frac{1}{p(\secp)} - \frac{1}{2p(\secp)} \quad (\text{By \cref{eq:assump_main,eq:ind_V_P}.}) \\ 
        &= \frac{1}{2p(\secp)}, \label{eq:dist_iO}
    \end{align}
    where $\hat{V}_{x,v}^i\gets\Obf(1^\secp,V_\secp[x,v])$, $\hat{P}_{k,s(\secp)}^i\gets\Obf(1^\secp,P_{k,s(\secp)})$, and $\hat{Z}_{m(\secp),s(\secp)}^i\gets\Obf(1^\secp,Z_{m(\secp),s(\secp)})$ for each $i\in[\ell(\secp)]$.
    Let us consider the following QPT algorithm $\cE$:
    \begin{enumerate}
        \item Take $(1^\secp,\hat{C}^1,...,\hat{C}^{\ell(\secp)},V_\secp[x,v],Z_{m(\secp),s(\secp)})$ as input, where $\hat{C}^i\in\{\hat{V}_{x,v}^i,\hat{Z}^i_{m(\secp),s(\secp)}\}$ for all $i\in[\ell(\secp)]$.
        \item Run $b\gets\cA(1^\secp,\hat{C}^1,...,\hat{C}^{\ell(\secp)})$.
        \item Output $b$.
    \end{enumerate}
    By \cref{eq:dist_iO}, there exist infinitely many $\secp\in\mathbb{N}$ such that for all $x\in\Pi_{yes}\cap\bit^\secp$, 
    \begin{align}
        &\underset{v\gets\bit^{m(\secp)}}{\mathbb{E}} \Bigg[ \Pr[1\gets\cE(1^\secp,\hat{V}_{x,v}^1,...,\hat{V}_{x,v}^{\ell(\secp)},V_\secp[x,v],Z_{m(\secp),s(\secp)})] \\ 
        &\qquad - \Pr[1\gets\cE(1^\secp,\hat{Z}_{m(\secp),s(\secp)}^1,...,\hat{Z}_{m(\secp),s(\secp)}^{\ell(\secp)},V_\secp[x,v],Z_{m(\secp),s(\secp)})] \Bigg] \\ 
        &= \Pr_{v\gets\bit^{m(\secp)}}[1\gets\cA(1^\secp,\hat{V}_{x,v}^1,...,\hat{V}_{x,v}^{\ell(\secp)})] - \Pr[1\gets\cA(1^\secp,\hat{Z}_{m(\secp),s(\secp)}^1,...,\hat{Z}_{m(\secp),s(\secp)}^{\ell(\secp)})] \\
        &\ge \frac{1}{2p(\secp)}. \label{eq:dist_iO_2}
    \end{align}
    For each $\secp\in\mathbb{N}$ and each $x\in\bit^\secp$, define a set 
    \begin{align}
        \Good_{\secp,x} := \Bigg\{ v\in\bit^{m(\secp)} : 
        \begin{aligned}
            &\Pr[ 1\gets\cE(1^\secp,\hat{V}_{x,v}^1,...,\hat{V}_{x,v}^{\ell(\secp)},V_\secp[x,v],Z_{m(\secp),s(\secp)}) ] \\ 
            &- \Pr [1\gets\cE(1^\secp,\hat{Z}_{m(\secp),s(\secp)}^1,...,\hat{Z}_{m(\secp),s(\secp)}^{\ell(\secp)},V_\secp[x,v],Z_{m(\secp),s(\secp)})] \ge \frac{1}{4p(\secp)}
        \end{aligned} \Bigg\}.
    \end{align}
    By \cref{eq:dist_iO_2}, 
    \begin{align}
        \frac{1}{2p(\secp)} 
        &\le \Pr_{v\gets\bit^{m(\secp)}}[v\in\Good_{\secp,x}] + \left(1-\Pr_{v\gets\bit^{m(\secp)}}[v\in\Good_{\secp,x}]\right) \frac{1}{4p(\secp)}.
    \end{align}
    Thus, for infinitely many $\secp\in\mathbb{N}$ and for all $x\in\Pi_{yes}\cap\bit^\secp$,
    \begin{align}
        \Pr_{v\gets\bit^{m(\secp)}}[v\in\Good_{\secp,x}] \ge \frac{1}{4p(\secp)-1}.
    \end{align}
    By \cref{lem:diO}, $(\Obf,\Eval)$ is also a single-point diO for classical circuits.
    Therefore, there exist a QPT algorithm $\Ext$ and a polynomial $q$ such that the following holds: There exist infinitely many $\secp\in\mathbb{N}$ such that for all $x\in\Pi_{yes}\cap\bit^\secp$ and all $v\in\Good_{\secp,x}$,
    \begin{align}
        \Pr\left[V_\secp[x,v](z)\neq Z_{m(\secp),s(\secp)}(z):
        \begin{gathered}
            z\gets\Ext(1^\secp,V_\secp[x,v],Z_{m(\secp),s(\secp)})
        \end{gathered}\right]
        \ge \frac{1}{q(\secp)}.
    \end{align}
    By using such $\Ext$, we construct a QPT algorithm $\cB$ as follows:
    \begin{itemize}
        \item Take $x\in\bit^*$ as input. Set $\secp\coloneqq|x|$.
        \item Compute $m(\secp)$. Here $m(\secp)$ is the classical-polynomial-time-computable polynomial that corresponds to the witness length for $x$. 
        \item Sample $v\gets\bit^{m(\secp)}$.
        \item Compute $s(\secp)$. Here $s(\secp)$ is the classical-polynomial-time-computable polynomial that corresponds to the size of $V_\secp[x,v]$.  
        \item Run $z\gets\Ext(1^\secp,V_\secp[x,v],Z_{m(\secp),s(\secp)})$.
        \item Output $z\oplus v$.
    \end{itemize}
    Then, 
    for infinitely many $\secp\in\mathbb{N}$,
    \begin{align}
        &\Pr[M(x,w)=1:w\gets\cB(x)] \\
        &\ge \Pr\left[v\in\Good_{\secp,x} \land V_\secp[x,v](z)\neq Z_{m(\secp),s(\secp)}(z):
        \begin{gathered}
            v\gets\bit^{m(\secp)}; \\
            z\gets\Ext(1^\secp,V_\secp[x,v],Z_{m(\secp),s(\secp)})
        \end{gathered}
        \right] \\ 
        &\ge \frac{1}{q(\secp)(4p(\secp)-1)},
    \end{align}
    holds for all $x\in \Pi_{yes}\cap\bit^\secp$.
    We have the QPT algorithm $\cB$ and a polynomial $r(\secp):=q(\secp)(4p(\secp)-1)$ that satisfy \cref{eq:find_w}, and therefore we complete the proof.
\end{proof}

\if0
\begin{corollary}\label{cor:main}
  Suppose $\mathsf{NP}\nsubseteq\mathsf{i.o.BQP}$ and 
    $(\Obf,\Eval)$ is a $(\mathtt{X},\mathtt{Y},\mathtt{Z})$-iO for classical circuits for $(\mathtt{X},\mathtt{Y},\mathtt{Z})\in \{\QQ,\CC\}^3$. 
    For any polynomial $t$ and any QPT adversary $\cA$, 
    \begin{align}
        \Pr \left[ k'=k : 
        \begin{gathered}
            k\gets\bit^\secp; \\
           \hat{P}_k^i\gets \Obf(1^\secp,P_k) \text{~for~}i\in [t(\secp)]; \\
            k'\gets\cA(1^n,\hat{P}_k^1,\hat{P}_k^2,\ldots,\hat{P}_k^{t(\secp)})
        \end{gathered}
        \right]\le \negl(\secp). 
    \end{align}
\end{corollary}
\takashi{@Shirakawasan: Please fill the proof. 
This should be an immediate corollary of the above theorem.}
\fi
\section{Cryptographic Implications}
In this section, we use \Cref{thm:main} to demonstrate cryptographic implications of various quantum iO variants (as defined in \Cref{def:variant_qiO}), when combined with the worst-case hardness of $\mathsf{NP}$. 
\subsection{Q-Obf, Q-Eval, and Q-Encoding}
Here, we study implications of $(\QQ,\QQ,\QQ)$-iO, where both $\Obf$ and $\Eval$ are quantum algorithms and an obfuscated encoding $\hat{C}$ is a quantum state.  
We prove the following theorem:
\begin{theorem}\label{thm:QQQ_QSKE}
    Suppose $\mathsf{NP}\nsubseteq\mathsf{i.o.BQP}$ and there exists $(\QQ,\QQ,\QQ)$-iO for classical circuits. Then there exists an IND-CPA secure QSKE scheme.
\end{theorem}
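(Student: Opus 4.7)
The plan is to encrypt messages bit by bit, using an obfuscation of $P_{k,s(\secp)}$ for a $1$-bit and an obfuscation of $Z_{m(\secp),s(\secp)}$ for a $0$-bit, where $m,s$ are the polynomials supplied by \Cref{thm:main}. Concretely: $\Gen(1^\secp)$ samples $k\gets\bit^{m(\secp)}$ and sets $\sk:=k$; $\Enc(\sk,\msg)$ parses $\msg=b_1b_2\cdots b_t$ and outputs $\ct=(\ct_1,\ldots,\ct_t)$ with $\ct_i\gets\Obf(1^\secp,P_{k,s(\secp)})$ if $b_i=1$ and $\ct_i\gets\Obf(1^\secp,Z_{m(\secp),s(\secp)})$ if $b_i=0$; $\Dec(\sk,(\ct_1,\ldots,\ct_t))$ outputs the concatenation $\Eval(\ct_1,k)\Eval(\ct_2,k)\cdots\Eval(\ct_t,k)$. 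Correctness follows from the iO correctness guarantee applied to the canonical point and zero circuits (evaluated at the single input $k$), together with a union bound over the $t=\poly(\secp)$ component obfuscations in a single ciphertext.

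For IND-CPA security, I would use a three-hybrid argument reducing directly to \Cref{thm:main}. Fix a QPT adversary $\cA$ and let $q(\secp)$ be a polynomial upper bound on the total number of ciphertext blocks $\cA$ ever sees across all encryption-oracle responses together with the challenge. Define hybrid $H_0$ as the real game with $b=0$, hybrid $H_2$ as the real game with $b=1$, and hybrid $H_1$ as the variant in which every ciphertext block (whether produced by an oracle response or by the challenger) is replaced by an independent fresh $\Obf(1^\secp,Z_{m(\secp),s(\secp)})$. In $H_1$ the adversary's view is information-theoretically independent of $b$, so it suffices to establish $H_0\approx_c H_1\approx_c H_2$.

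Both indistinguishabilities reduce to \Cref{thm:main} with $\ell:=q$. For $H_0\approx_c H_1$, the reduction receives samples $\hat{C}^1,\ldots,\hat{C}^{q(\secp)}$ drawn from either $\cD_0(\secp)$ or $\cD_1(\secp)$ and simulates $\cA$'s view on the fly: walking through each ciphertext block the adversary is about to see in order (using $\msg_0$ as the challenge message), it hands out the next unused $\hat{C}^i$ whenever the underlying plaintext bit is $1$, and an independently generated $\Obf(1^\secp,Z_{m(\secp),s(\secp)})$ whenever the bit is $0$. If the samples come from $\cD_0$, every bit-$1$ block is a fresh obfuscation of $P_{k,s(\secp)}$ for a single uniform hidden $k$ that the reduction never uses anywhere else, exactly reproducing $H_0$; if they come from $\cD_1$, every block in the view is an obfuscation of $Z_{m(\secp),s(\secp)}$, exactly reproducing $H_1$. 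The proof of $H_2\approx_c H_1$ is identical except that the challenge blocks are laid out according to $\msg_1$.

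The main subtlety is exactly the one that \Cref{thm:main} was engineered to absorb: the reduction must coordinate many encryptions under one hidden secret key, so one really does need multi-sample indistinguishability with a shared $k$, as opposed to the single-sample statement. Without the $\ell$-sample strengthening, this construction would only yield one-time security for a single-bit challenge with no oracle access; with \Cref{thm:main} in hand, the rest of the argument is mechanical and the only bookkeeping point is to commit to a polynomial $q$ upper-bounding $\cA$'s total ciphertext consumption before invoking the theorem.
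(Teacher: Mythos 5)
Your proof is correct and follows essentially the same approach as the paper: both reduce IND-CPA security to the multi-sample indistinguishability of \Cref{thm:main}, both route through an intermediate "all-$Z$" hybrid, and both exploit the fact that $Z$-obfuscations can be generated by the reduction without the key. The only cosmetic differences are that you construct a multi-bit scheme directly (the paper reduces to single-bit messages without loss of generality) and package the argument as a symmetric three-hybrid chain rather than the paper's observation that queries on message $0$ can be stripped away before invoking the theorem.
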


\begin{proof}[Proof of \cref{thm:QQQ_QSKE}]
    By \cref{thm:main}, there exist classical-polynomial-time-computable polynomials $m$ and $s$ such that for any polynomial $\ell$, the following two distributions over quantum states are computationally indistinguishable:
    \begin{itemize}
        \item $\cD_0(\secp)$: Sample $k\gets\bit^{m(\secp)}$, run $\hat{P}_{k,s(\secp)}^i\gets\Obf(1^\secp,P_{k,s(\secp)})$ for $i\in[\ell(\secp)]$, and output $(1^\secp,\hat{P}_{k,s(\secp)}^1,...,\hat{P}_{k,s(\secp)}^{\ell(\secp)})$.
        \item $\cD_1(\secp)$: Run $\hat{Z}_{m(\secp),s(\secp)}^i\gets\Obf(1^\secp,Z_{m(\secp),s(\secp)})$ for $i\in[\ell(\secp)]$ and output $(1^\secp,\hat{Z}_{m(\secp),s(\secp)}^1,...,\hat{Z}_{m(\secp),s(\secp)}^{\ell(\secp)})$.
    \end{itemize}
    Without loss of generality, it suffices to construct an IND-CPA secure QSKE scheme $(\Gen,\Enc,\Dec)$ for single-bit message.
    We construct $(\Gen,\Enc,\Dec)$ as follows:
    \begin{itemize}
        \item $\Gen(1^\secp)\to\sk$: Take the security parameter $1^\secp$ as input, compute $m(\secp)$, and sample $k\gets\bit^{m(\secp)}$. Output $\sk:=k$.
        \item $\Enc(\sk,b)\to\ct$: Take the secret key $\sk$ and a message $b\in\bit$ as input. Let $C_0:=Z_{m(\secp),s(\secp)}$ and $C_1:=P_{\sk,s(\secp)}$. Sample $\hat{C}_b\gets\Obf(1^\secp,C_b)$ and output $\ct:=\hat{C}_b$.
        \item $\Dec(\sk,\ct)\to b'$: Take $\sk$ and $\ct$ as input. Run $b'\gets\Eval(\ct,\sk)$ and output $b'$.
    \end{itemize}
    The correctness of $(\Gen,\Enc,\Dec)$ follows from the correctness of iO:
    For all $b\in\bit$,
    \begin{align}
        \Pr\left[b'=b:
        \begin{gathered}
            \sk\gets\Gen(1^\secp); \\
            \ct\gets\Enc(\sk,b); \\
            b'\gets\Dec(\sk,\ct)
        \end{gathered}\right] 
        &= \Pr\left[b'=b:
        \begin{gathered}
            k\gets\bit^{m(\secp)}; \\
            \hat{C}_b\gets\Obf(1^\secp,C_b); \\
            b'\gets\Eval(\hat{C}_b,k)
        \end{gathered}\right] \\ 
        &= \Pr\left[\Eval(\hat{C}_b,k)=C_b(k):
        \begin{gathered}
            k\gets\bit^{m(\secp)}; \\
            \hat{C}_b\gets\Obf(1^\secp,C_b)
        \end{gathered}\right] \\ 
        &\ge 1-\negl(\secp).
    \end{align}
    We show the IND-CPA security of $(\Gen,\Enc,\Dec)$.
    In the IND-CPA security game of $(\Gen,\Enc,\Dec)$, it suffices to consider any QPT adversary that makes polynomially many queries only on message 1 to the encryption oracle because the ciphertext for message 0 can be computed without the secret key by simply running $\Obf(1^\secp,Z_{m(\secp),s(\secp)})$.
    Thus, our goal is to show that for any polynomial $t$, the following two distributions are computationally indistinguishable:
    \begin{itemize}
        \item $\cD_0'(\secp)$: Sample $\sk\gets\Gen(1^\secp)$, run $\ct_1^i\gets\Enc(\sk,1)$ for $i\in[t(\secp)]$ and $\ct^*_1\gets\Enc(\sk,1)$, and output $(\ct_1^1,...,\ct_1^{t(\secp)},\ct^*_1)$. 
        \item $\cD_1'(\secp)$: Sample $\sk\gets\Gen(1^\secp)$ run $\ct_1^i\gets\Enc(\sk,1)$ for $i\in[t(\secp)]$ and $\ct^*_0\gets\Enc(\sk,0)$, and output $(\ct_1^1,...,\ct_1^{t(\secp)},\ct^*_0)$.
    \end{itemize}
    When $\ell(\secp)=t(\secp)+1$, the distribution $\cD'_0(\secp)$ is identical to the distribution $\cD_0(\secp)$.
    Thus, for any polynomial $t$ and for any QPT adversary $\cA$,
    \begin{align}
        &\left| \Pr[1\gets\cA(1^\secp,\cD'_0(\secp))] - \Pr[1\gets\cA(1^\secp,\cD'_1(\secp))] \right| \\ 
        &= \left| \Pr[1\gets\cA(1^\secp,\cD_0(\secp))] - \Pr[1\gets\cA(1^\secp,\cD'_1(\secp))] \right| \\
        &\le \left| \Pr[1\gets\cA(1^\secp,\cD_0(\secp))] - \Pr[1\gets\cA(1^\secp,\cD_1(\secp))] \right| + \left| \Pr[1\gets\cA(1^\secp,\cD_1(\secp))] - \Pr[1\gets\cA(1^\secp,\cD'_1(\secp))] \right| \\ 
        &\le \negl(\secp) + \left| \Pr[1\gets\cA(1^\secp,\cD_1(\secp))] - \Pr[1\gets\cA(1^\secp,\cD'_1(\secp))] \right|. \quad (\text{By \cref{thm:main}.})
    \end{align}
    To show the computational indistinguishability between $\cD_0'(\secp)$ and $\cD'_1(\secp)$, it suffices to show that $\cD_1(\secp)$ and $\cD'_1(\secp)$ are computationally indistinguishable.
    For the sake of contradiction, assume that there exist polynomials $t$ and $p$ and a QPT algorithm $\cA$ such that
    \begin{align}
        \frac{1}{p(\secp)} 
        &\le \left| \Pr[1\gets\cA(1^\secp,\cD_1(\secp))] - \Pr[1\gets\cA(1^\secp,\cD'_1(\secp))] \right| \\
        &= \Bigg| \Pr[1\gets\cA(1^\secp,\hat{Z}^1_{m(\secp),s(\secp)},...,\hat{Z}^{t(\secp)}_{m(\secp),s(\secp)},\hat{Z}^{t(\secp)+1}_{m(\secp),s(\secp)})] \\ 
        &\qquad - \Pr_{k\gets\bit^{m(\secp)}}[1\gets\cA(1^\secp,\hat{P}^1_{k,s(\secp)},...,\hat{P}^{t(\secp)}_{k,s(\secp)},\hat{Z}^{t(\secp)+1}_{m(\secp),s(\secp)})] \Bigg|
    \end{align}
    for infinitely many $\secp\in\mathbb{N}$, where $\hat{Z}^i_{m(\secp),s(\secp)}\gets\Obf(1^\secp,Z_{m(\secp),s(\secp)})$ and $\hat{P}^i_{k,s(\secp)}\gets\Obf(1^\secp,P_{k,s(\secp)})$ for $i\in[t(\secp)+1]$.
    Let us consider a QPT algorithm $\cB$ that on input $(1^\secp,\hat{C}_b^1,...,\hat{C}_b^{t(\secp)})$, runs $\hat{Z}_{m(\secp),s(\secp)}\gets\Obf(1^\secp,Z_{m(\secp),s(\secp)})$ and $b'\gets\cA(1^\secp,\hat{C}_b^1,...,\hat{C}_b^{t(\secp)},\hat{Z}_{m(\secp),s(\secp)})$, and outputs $b'$.
    Then,
    \begin{align}
        &\Bigg| \Pr[1\gets\cB(1^\secp,\hat{Z}_{m(\secp),s(\secp)}^1,...,\hat{Z}_{m(\secp),s(\secp)}^{t(\secp)})] - \Pr_{k\gets\bit^{m(\secp)}}[1\gets\cB(1^\secp,\hat{P}^1_{k,s(\secp)},...,\hat{P}^{t(\secp)}_{k,s(\secp)})] \Bigg| \\ 
        &=
        \Bigg| \Pr[1\gets\cA(1^\secp,\hat{Z}^1_{m(\secp),s(\secp)},...,\hat{Z}^{t(\secp)}_{m(\secp),s(\secp)},\hat{Z}^{t(\secp)+1}_{m(\secp),s(\secp)})] \\ 
        &\qquad - \Pr_{k\gets\bit^{m(\secp)}}[1\gets\cA(1^\secp,\hat{P}^1_{k,s(\secp)},...,\hat{P}^{t(\secp)}_{k,s(\secp)},\hat{Z}^{t(\secp)+1}_{m(\secp),s(\secp)})] \Bigg| \\ 
        &\ge \frac{1}{p(\secp)}
    \end{align}
    for infinitely many $\secp\in\mathbb{N}$.
    This contradicts \cref{thm:main} and therefore the distributions $\cD'_1(\secp)$ and $\cD_1(\secp)$ are computationally indistinguishable.
    Hence we complete the proof of IND-CPA security.
\end{proof}

Since IND-CPA secure QSKE implies OWSGs and EFI pairs~\cite{TQC:MorYam24,STOC:KhuTom24,FOCS:BatJai24}, we have the following corollary.
\begin{corollary}\label{cor:QQQ_OWSG_EFI}
    Suppose $\mathsf{NP}\nsubseteq\mathsf{i.o.BQP}$ and there exists $(\QQ,\QQ,\QQ)$-iO for classical circuits. Then there exist OWSGs and EFI pairs. 
\end{corollary}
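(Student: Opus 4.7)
The plan is to derive the corollary in a completely modular way from \Cref{thm:QQQ_QSKE} together with known reductions in the quantum cryptography literature. Since \Cref{thm:QQQ_QSKE} already supplies an IND-CPA secure QSKE scheme under the stated hypotheses, there is essentially nothing new to prove that involves iO or the Valiant--Vazirani machinery; the entire argument proceeds inside Microcrypt and appeals to existing black-box transformations.

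First, I would invoke \Cref{thm:QQQ_QSKE} on the hypotheses $\mathsf{NP}\nsubseteq\mathsf{i.o.BQP}$ and the existence of $(\QQ,\QQ,\QQ)$-iO to obtain an IND-CPA secure QSKE scheme $(\Gen,\Enc,\Dec)$ with classical secret key and quantum ciphertext. Next, I would apply the reduction of~\cite{TQC:MorYam24}, which shows that any IND-CPA secure QSKE implies (mixed-state) OWSGs. The standard instantiation takes the OWSG key to be the QSKE secret key $\sk$ and the OWSG state to be the ciphertext $\Enc(\sk,0)$ for any fixed message; any adversary breaking one-wayness of the resulting OWSG can be turned into one breaking IND-CPA security even in the single-challenge regime with no encryption-oracle queries.

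For EFI pairs, I would appeal to the chain OWSGs $\Rightarrow$ EFI established by~\cite{STOC:KhuTom24,FOCS:BatJai24}; combining this with the OWSG from the previous step yields EFI immediately. Alternatively, one can observe that IND-CPA secure QSKE gives EFI more directly via the pair of distributions obtained by sampling $\sk\gets\Gen(1^\secp)$ and outputting $\Enc(\sk,0)$ versus $\Enc(\sk,1)$: efficient preparability is immediate from the QPT encryption algorithm, computational indistinguishability follows from IND-CPA security, and statistical distinguishability is the only nontrivial part, handled by the analyses in the cited references.

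I do not anticipate any substantive obstacle, since the individual implications are by now standard Microcrypt folklore rigorously formalized in the cited works. The only mild point of care is definitional alignment: one should verify that the mixed-state OWSG variant adopted in this paper (as flagged in the preliminaries) matches the output notion of~\cite{TQC:MorYam24}, and that the EFI formulation here agrees with that of~\cite{STOC:KhuTom24,FOCS:BatJai24}. Both alignments are immediate, so the corollary follows.
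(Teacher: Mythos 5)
Your proposal is correct and follows exactly the paper's route: invoke \Cref{thm:QQQ_QSKE} to obtain IND-CPA secure QSKE and then apply the known black-box implications QSKE $\Rightarrow$ OWSGs and OWSGs $\Rightarrow$ EFI (or QSKE $\Rightarrow$ EFI directly), citing the same works~\cite{TQC:MorYam24,STOC:KhuTom24,FOCS:BatJai24}. The paper treats this as an immediate consequence stated in one line before the corollary, and your added detail about the OWSG instantiation and the mixed-state definitional alignment is consistent with the paper's own remark in the preliminaries.
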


\if0
\begin{remark}[The case of pure state encodings.] 
\Cref{cor:QQQ_OWSG_EFI} yields only mixed OWSGs (as defined in \cite{TQC:MorYam24}), since QSKE implies only mixed OWSGs~\cite{TQC:MorYam24}.\mor{Do you here more precisely means that ct of IND-CPA QSKE should be mixed, and from mixed ct one-time QSKE, we get mixed OWSG? I think pure OWSG can be obtained from pure ct one-time QSKE.} However, we argue that pure OWSGs (as defined in \cite{C:MorYam22}) can be obtained under the additional assumptions that the obfuscated encodings are pure states and that the security of the iO holds even when the adversary is given an (unbounded) polynomial number of copies of the obfuscated encoding. 
In this case, we can define a pure OWSG that, on input $k \in \bit^\secp$, outputs $\Obf(1^\secp, P_k)$, which is now assumed to be a pure state. The one-wayness of this construction then follows directly from \Cref{thm:main}.\mor{But in this case we do not get IND-CPA QSKE, because its ct should be mixed. Right?}
\takashi{I added this remark. Is this okay?}
\end{remark}
\fi

\subsection{Q-Obf, Q-Eval, and C-Encoding}
Here, we study implications of $(\QQ,\QQ,\CC)$-iO, where both $\Obf$ and $\Eval$ are quantum algorithms, but an obfuscated encoding $\hat{C}$ is classical. This can be regarded as an iO in the QCCC model. 
We prove the following theorem:
\begin{theorem}\label{thm:QQC_QCCC_SKE}
    Suppose $\mathsf{NP}\nsubseteq\mathsf{i.o.BQP}$ and there exists $(\QQ,\QQ,\CC)$-iO for classical circuits. Then there exists an IND-CPA secure QCCC SKE scheme. 
\end{theorem}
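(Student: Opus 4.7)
The plan is to essentially replay the construction and proof of \cref{thm:QQQ_QSKE}, observing that when the obfuscated encoding is a classical string (i.e., $\mathtt{Z}=\CC$), the resulting ciphertexts are classical, so the scheme is automatically a QCCC SKE scheme rather than a QSKE scheme. Since $(\QQ,\QQ,\CC)$-iO still produces obfuscations using QPT algorithms $\Obf$ and $\Eval$, the encryption and decryption procedures are QPT but communicate only classical bits, matching the QCCC SKE syntax (\cref{def:QCCC_SKE}).

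Concretely, I would first invoke \cref{thm:main} with $(\mathtt{X},\mathtt{Y},\mathtt{Z})=(\QQ,\QQ,\CC)$ to obtain classical-polynomial-time-computable polynomials $m$ and $s$ such that for every polynomial $\ell$, the distributions
\[
\cD_0(\secp)=\bigl(1^\secp,\hat{P}^1_{k,s(\secp)},\dots,\hat{P}^{\ell(\secp)}_{k,s(\secp)}\bigr) \quad \text{and} \quad \cD_1(\secp)=\bigl(1^\secp,\hat{Z}^1_{m(\secp),s(\secp)},\dots,\hat{Z}^{\ell(\secp)}_{m(\secp),s(\secp)}\bigr)
\]
are computationally indistinguishable, where now both distributions are over classical bit strings. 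Then I would define the QCCC SKE scheme $(\Gen,\Enc,\Dec)$ for single-bit messages exactly as in \cref{thm:QQQ_QSKE}: the key is $\sk=k\gets\bit^{m(\secp)}$, encryption of $b\in\bit$ outputs $\hat{C}_b\gets\Obf(1^\secp,C_b)$ with $C_0=Z_{m(\secp),s(\secp)}$ and $C_1=P_{k,s(\secp)}$, and decryption runs $\Eval(\ct,\sk)$. Since $\mathtt{Z}=\CC$, $\hat{C}_b$ is a classical string, so ciphertexts are classical; hence the construction satisfies the QCCC SKE syntax.

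Correctness follows directly from the correctness of the underlying $(\QQ,\QQ,\CC)$-iO in the same way as in \cref{thm:QQQ_QSKE}. For IND-CPA security, as in that proof, it suffices to reduce the multi-query IND-CPA game to the indistinguishability of $\cD_0$ and $\cD_1$: ciphertexts of $0$ are publicly samplable (they require only $\Obf(1^\secp,Z_{m,s})$ and not the secret key), so only message-$1$ encryption queries matter, and a hybrid argument combined with \cref{thm:main} handles the challenge ciphertext for both the $b=0$ and $b=1$ cases. The reduction is identical to the $(\QQ,\QQ,\QQ)$ case since it is agnostic to whether the encoding is classical or quantum.

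I do not expect any new obstacle: the only potential subtlety is that an adversary who sees many classical obfuscated encodings might try to leverage their classical nature (for instance by using them as non-uniform advice internally), but the multi-copy form of \cref{thm:main} already accounts for polynomially many copies under the same key, which is exactly what the encryption oracle provides. Hence the argument goes through verbatim, yielding an IND-CPA secure QCCC SKE scheme.
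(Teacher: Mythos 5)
Your proposal is correct and takes exactly the same route as the paper: the paper's proof simply observes that the construction and argument of \cref{thm:QQQ_QSKE} carry over verbatim, with the only change being that the obfuscated encodings are now classical strings, so the scheme satisfies the QCCC SKE syntax. Your spelled-out version (reinvoking \cref{thm:main} for $(\QQ,\QQ,\CC)$, reusing the single-bit construction, and replaying the multi-copy hybrid argument) matches the paper's intent exactly.
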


\begin{proof}[Proof of \cref{thm:QQC_QCCC_SKE}]
    The proof of this theorem is quite similar to \cref{thm:QQQ_QSKE}.
    The only difference is that the output of $\Obf$ is a classical string rather than a quantum state.
    The proof of \cref{thm:QQQ_QSKE} heavily relies on \cref{thm:main} that is also valid for $(\QQ,\QQ,\CC)$-iO.
    Thus, we can easily extend the proof of \cref{thm:QQQ_QSKE} to the $(\QQ,\QQ,\CC)$-iO.
\end{proof}

Since IND-CPA secure QCCC SKE implies EV-OWPuzz, which in turn implies OWPuzz, OWSGs, and QEFID pairs, and EFI pairs~\cite{STOC:KhuTom24,C:ChuGolGra24}, we have the following corollary. 
\begin{corollary}\label{cor:QQC_EVOWPuzz}
    Suppose $\mathsf{NP}\nsubseteq\mathsf{i.o.BQP}$ and there exists $(\QQ,\QQ,\CC)$-iO for classical circuits. Then there exist EV-OWPuzz, OWPuzz, OWSGs, QEFID pairs, and EFI pairs. 
\end{corollary}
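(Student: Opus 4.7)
The plan is to first obtain an IND-CPA secure QCCC SKE scheme by directly invoking \Cref{thm:QQC_QCCC_SKE} under the two hypotheses of the corollary, and then to chain together previously established reductions in order to derive each of the five listed primitives. This is the only step where we need to reason from scratch; everything else is a citation-based cascade.

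First, I would observe that the hypotheses of \Cref{thm:QQC_QCCC_SKE} are precisely the two assumptions of this corollary: $\mathsf{NP}\nsubseteq\mathsf{i.o.BQP}$ and the existence of $(\QQ,\QQ,\CC)$-iO for classical circuits. Applying that theorem therefore produces a QCCC SKE scheme $(\Gen,\Enc,\Dec)$ satisfying correctness and IND-CPA security in the sense of \Cref{def:QCCC_SKE}. It is worth emphasizing that this is the unique point in the argument where the worst-case hardness assumption $\mathsf{NP}\nsubseteq\mathsf{i.o.BQP}$ enters; all remaining steps are purely Microcrypt-internal implications that hold unconditionally.

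Next, I would appeal to the reduction of Khurana and Tomer~\cite{STOC:KhuTom24} from IND-CPA secure QCCC SKE to EV-OWPuzz. Concretely, one defines the puzzle sampler to run $k\gets\Gen(1^\secp)$ and $\ct\gets\Enc(k,0)$, outputting $k$ as the key and $\ct$ as the puzzle, with efficient verification via $\Dec(k,\ct)$; the one-wayness of this puzzle follows from IND-CPA security by a standard argument. From the resulting EV-OWPuzz I would then chain the following known reductions: EV-OWPuzz trivially implies OWPuzz by dropping the efficient verifier; EV-OWPuzz implies (mixed-state) OWSGs and QEFID pairs as shown in~\cite{C:ChuGolGra24}; and OWPuzz implies EFI pairs as established in~\cite{STOC:KhuTom24}. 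Each of these implications yields exactly one of the five primitives named in the corollary.

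The main obstacle is essentially bookkeeping rather than mathematics: I would need to verify that the security notions used in our \Cref{def:QCCC_SKE} (in particular, IND-CPA against non-uniform QPT adversaries with an encryption oracle) align with the input assumptions of the cited QCCC-SKE-to-EV-OWPuzz reduction, and that the downstream reductions to OWSGs, QEFID, OWPuzz, and EFI are stated in a form compatible with quantum-polynomial-time adversaries. Since all of these primitives and reductions were developed in the same recent line of work on Microcrypt in the QCCC model, no new technical content is required; the corollary follows directly from \Cref{thm:QQC_QCCC_SKE} composed with the cited implications.
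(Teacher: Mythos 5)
Your proposal is correct and follows exactly the paper's route: the corollary is obtained by applying \Cref{thm:QQC_QCCC_SKE} to get an IND-CPA secure QCCC SKE scheme and then citing the known chain QCCC SKE $\Rightarrow$ EV-OWPuzz $\Rightarrow$ OWPuzz, OWSGs, QEFID pairs, EFI pairs from~\cite{STOC:KhuTom24,C:ChuGolGra24}. The paper states this in one sentence without spelling out the sampler construction, so your additional detail is consistent but not needed.
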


\subsection{Q-Obf, C-Eval, and C-Encoding}\label{sec:QCC}
Here, we study implications of $(\QQ,\CC,\CC)$-iO, where $\Obf$ is a quantum algorithm, $\Eval$ is a classical algorithm, and an obfuscated encoding $\hat{C}$ is classical.  
We prove the following theorem:
\begin{theorem}\label{thm:QCC_QCCC_PKE}
    Suppose $\mathsf{NP}\nsubseteq\mathsf{i.o.BQP}$ and there exists $(\QQ,\CC,\CC)$-iO for classical circuits. Then there exists an IND-CPA secure QCCC PKE scheme. 
\end{theorem}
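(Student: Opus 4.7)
The plan is to instantiate the QCCC PKE construction sketched in the technical overview, which follows the classical Sahai--Waters template while substituting the indistinguishability $\Obf(1^\secp,P_{k,s(\secp)})$ vs.\ $\Obf(1^\secp,Z_{m(\secp),s(\secp)})$ from \cref{thm:main} for a PRF-based pseudorandomness argument. The main technical subtlety, relative to the overview, is handling a possibly randomized $\Eval$: since iO obfuscates only deterministic classical circuits, the encryption circuit must first derandomize $\Eval$ by amplification and then hard-wire the resulting randomness.

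I would first fix polynomials $m,s$ from \cref{thm:main} and a polynomial $t=t(\secp)$ large enough that, by Chernoff, the majority-vote amplification $\Eval^{(t)}(\hat{C},x;\mathbf{r})$ of $t$ independent executions $\Eval(\hat{C},x;r_i)$ has per-input failure probability at most $2^{-m(\secp)-\secp}$ whenever $\Eval(\hat{C},x)$ is correct with probability at least $2/3$. Key generation samples $k\gets\bit^{m(\secp)}$ and outputs $\pk\gets\Obf(1^\secp,P_{k,s(\secp)})$, $\sk\coloneqq k$. $\Enc(\pk,\msg)$ samples fresh $\mathbf{r}=(r_1,\ldots,r_t)$ and outputs $\ct\gets\Obf(1^\secp,C[\pk,\msg,\mathbf{r}])$, where $C[\pk,\msg,\mathbf{r}](k')$ is a classical circuit---with $\pk,\msg,\mathbf{r}$ hard-wired and padded to a canonical size depending only on $\secp$ and $|\msg|$---that returns $\msg$ if $\Eval^{(t)}(\pk,k';\mathbf{r})=1$ and $0^{|\msg|}$ otherwise. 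Decryption outputs $\Eval(\ct,\sk)$. Correctness follows by a two-layer argument: with overwhelming probability $\Eval^{(t)}(\pk,k;\mathbf{r})=P_{k,s(\secp)}(k)=1$, so $C[\pk,\msg,\mathbf{r}](k)=\msg$, and by iO correctness the outer $\Eval$ recovers this value.

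For IND-CPA security I use three hybrids. Hybrid 0 is the real game. Hybrid 1 replaces $\pk$ by $\Obf(1^\secp,Z_{m(\secp),s(\secp)})$ while keeping encryption unchanged; a straight-line reduction that embeds its challenge as $\pk$ and honestly simulates the rest of the game reduces the Hybrid 0--Hybrid 1 gap to the indistinguishability in \cref{thm:main} with $\ell=1$. Hybrid 2 further replaces the ciphertext by $\Obf(1^\secp,\widetilde{Z})$ for a canonical, size-padded constant-$0^{|\msg|}$ circuit $\widetilde{Z}$ on $m(\secp)$-bit inputs, making the ciphertext independent of $b$ so that $\cA$ has advantage zero. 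The step from Hybrid 1 to Hybrid 2 is the main obstacle and is where the amplified $\Eval^{(t)}$ is crucial: I will show that, with probability $1-\negl(\secp)$ over $\pk\gets\Obf(1^\secp,Z_{m(\secp),s(\secp)})$ and $\mathbf{r}$, the circuit $C[\pk,\msg_b,\mathbf{r}]$ is pointwise equal to $\widetilde{Z}$, because amplification drives the per-input failure probability below $2^{-m(\secp)-\secp}$ so that a union bound over the $2^{m(\secp)}$ inputs still leaves negligible error. Any QPT distinguisher with advantage $1/q(\secp)$ for infinitely many $\secp$ then yields, by averaging over this good event, concrete advice $(\pk^*_\secp,\mathbf{r}^*_\secp,\msg^*_{0,\secp},\msg^*_{1,\secp},b^*_\secp)$ for which $C[\pk^*,\msg^*_{b^*},\mathbf{r}^*]$ is functionally equivalent to $\widetilde{Z}$ yet induces a non-negligible gap; hard-coding this advice (and extending arbitrarily to the remaining $\secp$) yields a non-uniform QPT distinguisher against the iO security of the outer obfuscation, a contradiction.
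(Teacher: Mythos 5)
Your proposal matches the paper's proof in all essential respects: the construction ($\pk=\Obf(P_{k,s})$, $\sk=k$, ciphertext = obfuscation of a circuit that runs a derandomized $\Eval$ on $\pk$ with the input shifted to $k'$, decryption by running $\Eval$ on the key), the three hybrids (real game, swap $\pk$ for $\Obf(Z_{m,s})$ via \cref{thm:main} with $\ell=1$, swap the ciphertext for an obfuscated constant-zero circuit via iO security and a union bound), and the core observation that a possibly randomized $\Eval$ must be derandomized-by-amplification before being wrapped into a classical circuit.

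The one structural difference is how you carry out the amplification. The paper isolates it as a separate lemma (\Cref{lem:fixed_randomness_correctness}) and amplifies by modifying $\Obf$ itself to output $M=O(\secp+n_\secp)$ independent encodings, with $\Eval$ majority-voting across them; this gives a self-contained ``fixed randomness correctness'' property of the resulting iO that is then plugged in as a black box. You instead keep $\pk$ a single encoding and amplify only over the randomness of $\Eval$ by running it $t$ times on the same encoding and majority-voting, hard-wiring the $t$ randomnesses into $C[\pk,\msg,\mathbf{r}]$. Your route works because the paper's correctness condition (\Cref{def:qiO}) already guarantees that a single encoding is ``$2/3$-good on all inputs simultaneously'' except with negligible probability, so conditioning on that event and then applying Chernoff plus a union bound over the $2^{m(\secp)}$ inputs gives the same conclusion. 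Both approaches are correct; the paper's has the small advantage of producing a reusable corollary about the obfuscator itself, while yours avoids modifying $\Obf$ and keeps $\pk$ a single encoding. Your Hybrid 1 $\to$ Hybrid 2 step is also laid out in a bit more detail than the paper's (you spell out the averaging and the choice of non-uniform advice $(\pk^*_\secp,\mathbf{r}^*_\secp,\ldots)$ fixing the adversary's coins so that the circuit pair is determined), which is fine and is what ``a straightforward reduction to the security of $(\QQ,\CC,\CC)$-iO'' in the paper unpacks to.
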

To prove this theorem, we first show that $(\QQ,\CC,\CC)$-iO can be modified to satisfy a stronger notion of correctness that holds even for fixed randomness.
\begin{lemma}\label{lem:fixed_randomness_correctness}
Suppose that there exists $(\QQ,\CC,\CC)$-iO for classical circuits. Then there exists $(\QQ,\CC,\CC)$-iO for classical circuits that satisfies the fixed 
randomness correctness, defined below:
\begin{itemize}
   \item \textbf{Fixed randomness correctness:} For any family 
        $\{C_\secp\}_{\secp\in \mathbb{N}}$ of polynomial-size classical circuits of input length $n_\secp$, 
        \begin{align}\label{eq:fixed_randomness_correctness}
            \Pr_{\substack{\obfC_\secp\gets \Obf(1^\secp,C_\secp)\\ r\gets \mathcal{R}_\secp}}\left[
            \forall x\in \bit^{n_\secp}, 
           \Eval(\obfC_\secp,x;r)=C_\secp(x)\right] \ge 1-\negl(\secp)
        \end{align}
        where $\mathcal{R}_\secp$
        denotes the randomness space of 
        $\Eval(\obfC_\secp,x)$, and
        $\Eval(\obfC_\secp,x;r)$ denotes the execution with the fixed randomness $r$.\footnote{We assume without loss of generality that the randomness space of $\Eval(\obfC_\secp,x)$ only depends on $\secp$ and does not depend on $C_\secp$ and $x$.}  
\end{itemize}
\end{lemma}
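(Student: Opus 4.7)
The plan is to keep the obfuscator $\Obf$ unchanged and to replace the evaluator $\Eval$ with a boosted version $\Eval'$ obtained by parallel majority voting, so that the per-input error of $\Eval'$ becomes exponentially small in both the input length and the security parameter; then a union bound over all inputs upgrades the input-wise guarantee into the desired fixed randomness correctness. Since $\Obf$ is untouched, the security of the modified pair $(\Obf,\Eval')$ is immediate from that of the original iO, and since $\Eval'$ only invokes $\Eval$ polynomially many times and computes a classical majority, it remains a classical PPT algorithm, so the resulting scheme is still a $(\QQ,\CC,\CC)$-iO for classical circuits.

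The first step is to extract from the correctness of \Cref{def:qiO} the sharper statement that, with overwhelming probability over $\hat{C}_\secp\gets\Obf(1^\secp,C_\secp)$, every input $x\in\bit^{n_\secp}$ satisfies $\Pr[\Eval(\hat{C}_\secp,x)\ne C_\secp(x)]\le 1/4$. Indeed, the set of $\hat{C}_\secp$ for which some input has per-input error exceeding $1/4$ is, for any polynomial $p$ with $p(\secp)\ge 4$, a subset of the set of $\hat{C}_\secp$ for which some input has per-input error exceeding $1/p(\secp)$. By the correctness hypothesis, the latter set has probability at most $1/p(\secp)$ for all sufficiently large $\secp$; hence the former has probability bounded above by an arbitrary inverse polynomial and is therefore negligible.

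I would then define $\Eval'(\hat{C},x;r)$ to parse its randomness as $r=(r_1,\ldots,r_k)$ with $k=\Theta(n_\secp+\secp)$, independently run $\Eval(\hat{C},x;r_i)$ for each $i\in[k]$, and output the majority of the $k$ outputs. By a Chernoff bound, for any $\hat{C}$ with per-input error at most $1/4$ under $\Eval$, we have $\Pr_r[\Eval'(\hat{C},x;r)\ne C_\secp(x)]\le 2^{-n_\secp-\secp-1}$ for each fixed $x$, provided the constant hidden in $\Theta(\cdot)$ is chosen large enough relative to the Chernoff exponent. A union bound over the at most $2^{n_\secp}$ inputs then yields $\Pr_r[\exists x\in\bit^{n_\secp}:\Eval'(\hat{C},x;r)\ne C_\secp(x)]\le 2^{-\secp-1}$. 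Combining this with the negligible probability that $\hat{C}$ falls outside the ``good'' set of the first step, \Cref{eq:fixed_randomness_correctness} follows.

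I do not foresee a serious obstacle; the only mildly delicate point is the quantifier manipulation in the first step, which upgrades the definitional correctness with parameter $1/p$ into constant per-input error with overwhelming probability over $\hat{C}$. Everything after that is standard amplification via majority voting followed by a union bound over inputs.
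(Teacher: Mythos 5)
Your proof is correct, and it takes a route that is genuinely (though mildly) different from the paper's. The paper modifies \emph{both} $\Obf$ and $\Eval$: the new obfuscator outputs $M=O(\secp+n_\secp)$ independently sampled encodings, and the new evaluator runs $\Eval$ once on each and takes a majority. This lets the paper invoke directly the joint bound $\Pr_{\obfC,r}[\Eval(\obfC,x;r)=C_\secp(x)]\ge 2/3$ for each fixed $x$ and apply Chernoff plus a union bound, with no need to single out a ``good'' set of encodings. You instead leave $\Obf$ untouched and amplify only inside $\Eval$, re-running it $k=\Theta(n_\secp+\secp)$ times with fresh classical randomness on the \emph{same} encoding; the price is the extra quantifier step showing that with overwhelming probability over $\obfC$ the per-input error is already bounded by a constant ($\le 1/4$) simultaneously for all $x$, after which Chernoff and a union bound finish the argument. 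Both are correct, standard amplifications. Your variant has the minor advantage that the obfuscated encoding (and hence the public key or ciphertext length downstream) does not grow by a factor of $M$; the paper's variant avoids the ``good encoding'' case analysis and works directly from the averaged per-input guarantee. One small bookkeeping point worth being explicit about: $\Eval'$ must know $\secp$ (not just $n_\secp=|x|$) to choose $k$; as in the paper's variant, this is resolved by the usual convention that $1^\secp$ is recoverable from the encoding.
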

\begin{proof}[Proof of \Cref{lem:fixed_randomness_correctness}]
    The correctness (as in \Cref{def:qiO}) implies that, for any family 
        $\{C_\secp\}_{\secp\in \mathbb{N}}$ of polynomial-size classical circuits of input length $n_\secp$ and for any $x\in \bit^{n_\secp}$, 
                \begin{align}
            \Pr_{\substack{\obfC_\secp\gets \Obf(1^\secp,C_\secp)\\ r\gets \mathcal{R}_\secp}}\left[
           \Eval(\obfC_\secp,x;r)=C_\secp(x)\right] \ge 1-\negl(\secp) \ge 2/3.
        \end{align}
    Thus, if we modify $\Obf$ to run $M=O(\secp+n_\secp)$ times to output $M$ independently generated obfuscated encodings of $C_\secp$, and   $\Eval$ to evaluate each of the $M$ obfuscated encodings and take the majority result, we can ensure that it satisfies\footnote{Due to the modifications to $\Eval$, $\mathcal{R}_\secp$ is also updated accordingly; it now consists of $M$-tuples of the randomness used in the original $\Eval$.} 
      \begin{align}
            \Pr_{\substack{\obfC_\secp\gets \Obf(1^\secp,C_\secp)\\ r\gets \mathcal{R}_\secp}}\left[
           \Eval(\obfC_\secp,x;r)=C_\secp(x)\right]  \ge 1-2^{-(\secp+n_\secp)}.
        \end{align}
By taking the union bound over $x\in \bit^{n_\secp}$ it implies \Cref{eq:fixed_randomness_correctness}.
Moreover, the modification of $\Obf$ and $\Eval$ does not affect the security. Thus, this completes the proof of \Cref{lem:fixed_randomness_correctness}.
\end{proof}
Then we prove \Cref{thm:QCC_QCCC_PKE}. 
\begin{proof}[Proof of \Cref{thm:QCC_QCCC_PKE}]
Let $(\Obf,\Eval)$ be a $(\QQ,\CC,\CC)$-iO for classical circuits.
By \Cref{lem:fixed_randomness_correctness}, we can assume that it satisfies fixed randomness correctness without loss of generality. 
Let $m$ and $s$ be polynomials as in \Cref{thm:main}.
Then we construct a QCCC PKE scheme $(\Gen,\Enc,\Dec)$ as follows: 
    \begin{itemize}
        \item $\Gen(1^\secp)$: 
        Choose $k\gets \bit^{m(\secp)}$, and compute  $\hat{P}_{k,s(\secp)}\gets\Obf(1^\secp,P_{k,s(\secp)})$, where we recall that $P_{k,s(\secp)}$ denotes the canonical circuit of size $s(\secp)$ for the point function with target $k$. Output the 
        classical public key $\pk\coloneqq(1^\secp,\hat{P}_{k,s(\secp)})$ and 
        the classical secret key $\sk\coloneqq k$.
        \item $\Enc(\pk=(1^\secp,\hat{P}_{k,s(\secp)}),\msg)$: 
        Choose $r\gets \mathcal{R}_\secp$ where $\mathcal{R}_\secp$ is the randomness space of $\Eval(\hat{P}_{k,s(\secp)},k')$ for $k'\in \bit^{m(\secp)}$. 
        Let $C[\hat{P}_{k,s(\secp)},\msg,r]$ be a classical circuit that takes $k'\in \bit^{m(\secp)}$ as input and outputs $\msg$ if $\Eval(\hat{P}_{k,s(\secp)},k';r)=1$ and $0$ otherwise. 
        Compute $\hat{C}[\hat{P}_{k,s(\secp)},\msg,r]\gets\Obf(1^\secp,C[\hat{P}_{k,s(\secp)},\msg,r])$. 
        Output the ciphertext $\ct=\hat{C}[\hat{P}_{k,s(\secp)},\msg,r]$. 
        \item $\Dec(\sk=k,\ct=\hat{C}[\hat{P}_{k,s(\secp)},\msg,r])$: Compute $\msg'\gets \Eval(\hat{C}[\hat{P}_{k,s(\secp)},\msg,r],k)$ and output $\msg'$. 
    \end{itemize}
    By the correctness of  $(\QQ,\CC,\CC)$-iO, the above scheme clearly satisfies the correctness of PKE. 

    Below, we prove that it satisfies the IND-CPA security. For any QPT adversary $\cA$ and $b\in \bit$, we consider the following hybrid experiments:
    \begin{description}
    \item[$H_{1,b}$:] This is the original IND-CPA security experiment. That is, it works as follows: 
        \begin{enumerate}
            \item The challenger generates 
            $\hat{P}_{k,s(\secp)}\gets\Obf(1^\secp,P_{k,s(\secp)})$ for $k\gets \bit^{m(\secp)}$, and sends $\pk=(1^\secp,\hat{P}_{k,s(\secp)})$ to $\cA$. 
            \item $\cA$ chooses $\msg_0,\msg_1\in(\bit^*)^2$ of the same length and sends them to the challenger.
            \item The challenger generates $\hat{C}[\hat{P}_{k,s(\secp)},\msg_b,r]\gets\Obf(1^\secp,C[\hat{P}_{k,s(\secp)},\msg_b,r])$ 
            for $r\gets \mathcal{R}_\secp$,  
            and sends $\ct_b=\hat{C}[\hat{P}_{k,s(\secp)},\msg_b,r]$ to $\cA$.
            \item $\cA$ outputs $b'$, which is the output of the experiment.
        \end{enumerate}
        Our goal is to prove that $|\Pr[H_{1,0}=1]-\Pr[H_{1,1}=1]|\le \negl(\secp)$.
         \item[$H_{2,b}$:] This is identical to $H_{1,b}$ except that $\pk$ is set to be $\hat{Z}_{m(\secp),s(\secp)}\gets\Obf(1^\secp,Z_{m(\secp),s(\secp)})$, 
         and consequently $\ct_b$ is set to be  $\hat{C}[\hat{Z}_{m(\secp),s(\secp)},\msg_b,r]\gets\Obf(1^\secp,C[\hat{Z}_{m(\secp),s(\secp)},\msg_b,r])$
         where we recall that $Z_{m(\secp),s(\secp)}$ denotes the canonical zero-function on $m(\secp)$-bit inputs of size $s(\secp)$. By a straightforward reduction to \Cref{thm:main} for the case of $\ell=1$, we have 
         $|\Pr[H_{1,b}=1]-\Pr[H_{2,b}=1]|\le \negl(\secp)$ for $b\in \bit$.  
          \item[$H_{3,b}$:] This is identical to $H_{2,b}$ except that $\ct_b$ is set to be $\hat{Z}_{m(\secp),s'(\secp)}\gets\Obf(1^{\secp},Z_{m(\secp),s'(\secp)})$, where $s'(\secp)$ is the size of $C[\hat{Z}_{m(\secp),s(\secp)},\msg_b,r]$. (We assume without loss of generality that the size only depends on $\secp$ by padding.)
          By the fixed randomness correctness of the $(\QQ,\CC,\CC)$-iO, $C[\hat{Z}_{m(\secp),s(\secp)},\msg_b,r]$ is functionally equivalent to $Z_{m(\secp),s'(\secp)}$ with overwhelming probability over the choice of $r$. 
          Thus, by a straightforward reduction to the security of $(\QQ,\CC,\CC)$-iO, 
           we have 
         $|\Pr[H_{2,b}=1]-\Pr[H_{3,b}=1]|\le \negl(\secp)$ for $b\in \bit$.  

         Moreover, in $H_{3,b}$, no information of $b$ is given to $\cA$, and thus $\Pr[H_{3,0}=1]=\Pr[H_{3,1}=1]$. 
        \end{description}
        Combining the above, we obtain $|\Pr[H_{1,0}=1]-\Pr[H_{1,1}=1]|\le \negl(\secp)$. This completes the proof of IND-CPA security. 
\end{proof}
Since IND-CPA secure QCCC PKE implies EV-OWPuzz, which in turn implies OWPuzz, OWSGs, QEFID pairs, and EFI pairs~\cite{STOC:KhuTom24,C:ChuGolGra24}, we have the following corollary.  
\begin{corollary}\label{cor:QCC_EVOWPuzz}
    Suppose $\mathsf{NP}\nsubseteq\mathsf{i.o.BQP}$ and there exists $(\QQ,\CC,\CC)$-iO for classical circuits.  Then there exist EV-OWPuzz, OWPuzz, OWSGs, QEFID pairs, and EFI pairs. 
\end{corollary}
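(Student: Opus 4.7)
The plan is to derive this corollary as a direct consequence of \Cref{thm:QCC_QCCC_PKE} combined with the known chain of implications among Microcrypt primitives, so essentially no new cryptographic construction is needed. First I would invoke \Cref{thm:QCC_QCCC_PKE} on the two hypotheses $\mathsf{NP}\nsubseteq\mathsf{i.o.BQP}$ and the existence of $(\QQ,\CC,\CC)$-iO for classical circuits, obtaining an IND-CPA secure QCCC PKE scheme $(\Gen,\Enc,\Dec)$. Once this is in hand, the rest of the proof is a chain of black-box reductions from the literature.

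Next I would chain the following known implications, all of which are cited in the excerpt: IND-CPA secure QCCC PKE implies IND-CPA secure QCCC SKE (trivially, by using a public key as part of the secret key), and IND-CPA secure QCCC SKE implies EV-OWPuzz (by the construction of Khurana and Tomer~\cite{STOC:KhuTom24}); the latter chain was already used in the proof of \Cref{cor:QQC_EVOWPuzz}. From EV-OWPuzz, it is immediate that OWPuzz exists (since every EV-OWPuzz is in particular an OWPuzz). The implications EV-OWPuzz~$\Rightarrow$~OWSGs and EV-OWPuzz~$\Rightarrow$~QEFID pairs follow from \cite{C:ChuGolGra24}, and OWSGs~$\Rightarrow$~EFI pairs is well known~\cite{TQC:MorYam24,FOCS:BatJai24}.

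Since all of these reductions are black-box and quantum-polynomial-time, composing them with the QCCC PKE scheme produced by \Cref{thm:QCC_QCCC_PKE} yields constructions of EV-OWPuzz, OWPuzz, OWSGs, QEFID pairs, and EFI pairs under the stated hypotheses, which completes the proof.

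The only potential obstacle I see is bookkeeping: one should verify that the cited implications go through against quantum adversaries (not merely classical ones) and that they preserve the QCCC communication structure. Since Theorem~\ref{thm:QCC_QCCC_PKE} already produces a QCCC PKE scheme whose security holds against non-uniform QPT adversaries, and since the references \cite{STOC:KhuTom24,C:ChuGolGra24,TQC:MorYam24,FOCS:BatJai24} explicitly handle the QCCC/quantum setting, this verification is routine and the corollary follows without further work.
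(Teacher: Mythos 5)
Your proposal is correct and follows essentially the same route as the paper: invoke \Cref{thm:QCC_QCCC_PKE} to obtain an IND-CPA secure QCCC PKE scheme, then chain the known implications to EV-OWPuzz and downstream primitives via~\cite{STOC:KhuTom24,C:ChuGolGra24}. The paper goes directly from QCCC PKE to EV-OWPuzz without your intermediate QCCC SKE step, but that is an inessential difference.
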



\subsection{C-Obf, Q-Eval, and C-Encoding}
Here, we study implications of $(\CC,\QQ,\CC)$-iO, where $\Obf$ is a classical algorithm, $\Eval$ is a  quantum algorithms, and an obfuscated encoding $\hat{C}$ is classical. 
We prove the following theorem:
\begin{theorem}\label{thm:CQC_OWF_QCCC_PKE}
    Suppose $\mathsf{NP}\nsubseteq\mathsf{i.o.BQP}$ and there exists $(\CC,\QQ,\CC)$-iO for classical circuits. Then there exist OWFs and an IND-CPA secure QCCC PKE scheme. 
\end{theorem}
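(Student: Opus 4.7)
The plan is to handle the two implications separately. The OWF construction uses \Cref{thm:main} exactly as in the $(\CC,\CC,\CC)$ case: the key observation is that since $\Obf$ is a PPT algorithm producing a classical encoding, the pair of distributions from \Cref{thm:main} becomes a quantum-secure EFID pair, and no use is made of $\Eval$ being classical. The QCCC PKE construction then plugs the resulting post-quantum OWF into the standard Sahai-Waters construction~\cite{STOC:SahWat14}, with the only difference being that the quantum $\Eval$ turns $\Enc$ into a QPT algorithm.

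For the OWF, apply \Cref{thm:main} with $\ell\equiv 1$ to obtain classical-polynomial-time-computable polynomials $m$ and $s$ for which $\cD_0(\secp)$ (an obfuscation of $P_{k,s(\secp)}$ for uniform $k$) and $\cD_1(\secp)$ (an obfuscation of $Z_{m(\secp),s(\secp)}$) are computationally indistinguishable against uniform QPT adversaries. Because $\Obf$ is PPT and outputs a classical string, both distributions are classically efficiently samplable. To verify they are statistically far, I would use the correctness of iO: with overwhelming probability over $\hat{P}_{k,s(\secp)}\gets\Obf(1^\secp,P_{k,s(\secp)})$, the input $k$ satisfies $\Pr_r[\Eval(\hat{P}_{k,s(\secp)},k;r)=1]\ge 2/3$, whereas with overwhelming probability over $\hat{Z}_{m(\secp),s(\secp)}\gets\Obf(1^\secp,Z_{m(\secp),s(\secp)})$, every input $k'$ satisfies $\Pr_r[\Eval(\hat{Z}_{m(\secp),s(\secp)},k';r)=1]\le 1/3$. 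Hence the (unbounded) event ``there exists $k'$ with $\Pr_r[\Eval(\hat{C},k';r)=1]\ge 1/2$'' separates $\cD_0$ and $\cD_1$ with overwhelming total variation distance, so together with the computational indistinguishability of \Cref{thm:main} they form a post-quantum EFID pair. By the standard equivalence between EFID pairs and OWFs~\cite{Gol90}, which extends verbatim to the post-quantum setting, we obtain a post-quantum OWF.

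For the QCCC PKE I would follow the Sahai-Waters PKE construction~\cite{STOC:SahWat14} from classical iO and puncturable PRFs, instantiating the puncturable PRF from the post-quantum OWF above. In the resulting scheme $\Gen$ runs the classical $\Obf$ on a PRF-based circuit so the public key is a classical obfuscated encoding, the secret key is a classical PRF key, and ciphertexts are classical strings; the only departure from the purely classical setting is that $\Enc$ must invoke $\Eval$, which is quantum, so $\Enc$ is promoted to a QPT algorithm and the scheme naturally fits the QCCC model. Security follows from the standard punctured-programming hybrid argument, each step of which reduces either to the security of the $(\CC,\QQ,\CC)$-iO on two functionally equivalent classical circuits or to the post-quantum pseudorandomness of the puncturable PRF.

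The main obstacle I foresee is mostly bookkeeping rather than conceptual: one must check that every hybrid in the Sahai-Waters proof still goes through against QPT distinguishers and under the approximate, rather than perfect, correctness of $\Eval$. Since the security notions of both $(\CC,\QQ,\CC)$-iO and post-quantum puncturable PRFs are already formulated against QPT adversaries, I do not expect any new conceptual difficulty beyond what is standard in post-quantum iO applications.
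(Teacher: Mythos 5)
Your proposal is correct and takes essentially the same approach as the paper: both invoke \Cref{thm:main} with $\ell=1$ to obtain a pair of classically-polynomial-time-samplable distributions, establish statistical farness via the same unbounded distinguisher that checks whether some input $k'$ makes $\Pr[\Eval(\hat{C},k')=1]\ge 1/2$, pass through Goldreich's EFID-pair/OWF equivalence, and then adapt the Sahai--Waters PKE construction (noting that only $\Enc$ invokes the quantum $\Eval$) to get QCCC PKE. The only cosmetic difference is that you write $\Eval(\cdot;r)$ with explicit randomness, which is slightly imprecise since $\Eval$ is a quantum algorithm here and cannot be derandomized; the paper instead just refers to the acceptance probability $p_{k'}=\Pr[\Eval(\hat{C},k')=1]$, which is the cleaner formulation.
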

To show \cref{thm:CQC_OWF_QCCC_PKE}, we rely on the following lemma:
\begin{lemma}[\cite{Gol90}]
\label{lem:EFID_OWF}
    The following two conditions are equivalent:
    \begin{itemize}
        \item There exist OWFs.
        \item There exist pairs of classical-polynomial-time-samplable distributions that are statistically far but computationally indistinguishable.
    \end{itemize}
\end{lemma}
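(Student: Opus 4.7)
The plan is to prove both directions of the stated equivalence, treating the reverse direction as the substantive content. I will call a pair of PPT-samplable distributions that are statistically far but computationally indistinguishable an \emph{EFID pair}.

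For the forward direction (OWFs imply an EFID pair), the standard route is to invoke the HILL construction, which transforms any (post-quantum) OWF into a (post-quantum) pseudorandom generator $G\colon \bit^{n} \to \bit^{n+1}$. The pair $(G(U_n), U_{n+1})$ is then an EFID pair: both are PPT-samplable; they are computationally indistinguishable by PRG security; and the statistical distance is at least $1/2$ since $G(U_n)$ is supported on at most $2^n$ of the $2^{n+1}$ strings. A more elementary alternative would use a hard-core predicate of an injective OWF directly.

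For the reverse direction, let $S_0, S_1$ be PPT samplers with $\SD(D_0, D_1) \geq 1/p(\secp)$ for some polynomial $p$, but with $D_0$ and $D_1$ computationally indistinguishable against QPT adversaries. Define the candidate OWF by $f(b, r) := S_b(r)$ for $b \in \bit$ and uniform $r \in \bit^{\ell(\secp)}$, where $\ell$ bounds the randomness length of the samplers. The strategy is to show $f$ is a \emph{distributionally} one-way function (DOWF) and then invoke the (post-quantum) Impagliazzo--Luby transformation to obtain a standard OWF. To show the DOWF property I proceed by contraposition: given a QPT inverter $\cA$ whose output on input $y \gets f(b, r)$ is statistically close to uniform over $f^{-1}(y)$, define a distinguisher $\cD(y)$ that runs $\cA(y) \to (b', r')$ and outputs $b'$. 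Writing $n_b(y) := |\{r : S_b(r) = y\}|$, the identity $\tfrac{n_0(y)}{n_0(y)+n_1(y)} = \tfrac{D_0(y)}{D_0(y)+D_1(y)}$ implies that $\cD$'s distinguishing advantage between $D_0$ and $D_1$ equals, up to the inverter's approximation error,
\[
\sum_y \bigl(D_0(y)-D_1(y)\bigr)\cdot \frac{D_0(y)}{D_0(y)+D_1(y)} \;=\; \tfrac{1}{2}\sum_y \frac{\bigl(D_0(y)-D_1(y)\bigr)^2}{D_0(y)+D_1(y)},
\]
where the equality uses $\sum_y (D_0(y)-D_1(y)) = 0$.

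The main obstacle is lower-bounding this $\chi^2$-like quantity by a non-negligible function of $\SD(D_0, D_1)$. By Cauchy--Schwarz applied to the decomposition $|D_0(y)-D_1(y)| = \sqrt{(D_0(y)-D_1(y))^2/(D_0(y)+D_1(y))}\cdot \sqrt{D_0(y)+D_1(y)}$, one gets $\sum_y (D_0-D_1)^2/(D_0+D_1) \geq 2\,\SD(D_0,D_1)^2$, so $\cD$'s advantage is at least $\SD(D_0,D_1)^2 - \negl(\secp) \geq 1/p(\secp)^2 - \negl(\secp)$, which is non-negligible. This contradicts the computational indistinguishability of $(D_0, D_1)$ against QPT adversaries and establishes that $f$ is a DOWF. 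Finally, the (post-quantum) Impagliazzo--Luby transformation, which combines $f$ with universal hashing and parallel repetition to convert distributional inverter failure into standard inverter failure, yields a genuine OWF, completing the proof.
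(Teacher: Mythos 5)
Your proof is correct, and it is essentially the argument the paper is relying on: the paper does not prove this lemma itself but cites \cite{Gol90} (noting in a footnote that the classical proof extends to quantum adversaries straightforwardly), and your write-up is exactly that standard proof --- HILL for the forward direction, and for the reverse direction the distributional one-wayness of $f(b,r)=S_b(r)$ via the $\chi^2$-type identity plus Cauchy--Schwarz, followed by Impagliazzo--Luby. Both the algebraic identity and the lower bound $\sum_y (D_0(y)-D_1(y))^2/(D_0(y)+D_1(y)) \ge 2\,\SD(D_0,D_1)^2$ check out, so no gap.
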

\begin{lemma}[\cite{STOC:SahWat14}]\label{lem:PKE}
    If there exist $(\CC,\CC,\CC)$-iO for classical circuits and OWFs, then there exist IND-CPA secure PKE schemes.
\end{lemma}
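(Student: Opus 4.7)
The plan is to follow the classical Sahai-Waters PKE construction from iO and OWFs, adapted to the post-quantum setting. Since $(\CC,\CC,\CC)$-iO here is secure against non-uniform QPT adversaries with classical advice, and OWFs as defined in \Cref{def:OWFs} are quantumly secure, the entire Sahai-Waters argument can be re-run while ensuring that every primitive used is post-quantum secure. First, from post-quantum OWFs I would invoke the Goldreich-Levin theorem (the post-quantum version by Adcock-Cleve, or the analysis that works against quantum adversaries) and length-doubling iterated constructions to build a post-quantum secure length-doubling PRG $G: \bit^n \to \bit^{2n}$. Then using the GGM tree with this PRG I would build a post-quantum secure puncturable PRF $F$ with short keys $K$, a puncturing algorithm $\algo{Puncture}(K,x) \to K_x$ that removes the value at $x$, and the standard pseudorandomness-at-punctured-points property against QPT adversaries.

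Next I would instantiate the PKE scheme as follows. Let $\lambda$ be the security parameter and let the message space be $\bit^\ell$. Define the circuit $\mathsf{Enc}_K(r,m) = \bigl( G(r),\; F_K(G(r)) \oplus m \bigr)$ taking inputs $r \in \bit^n$ and $m \in \bit^\ell$. The setup samples $K$, computes $\pk \gets \Obf(1^\secp, \mathsf{Enc}_K)$, sets $\sk \gets K$. Encryption samples $r \gets \bit^n$ and runs $\Eval(\pk,(r,m))$ to produce $\ct = (c_1,c_2)$. Decryption computes $c_2 \oplus F_K(c_1)$. Correctness follows from the correctness of $(\CC,\CC,\CC)$-iO (after a standard majority-amplification so that the single-input correctness holds for all inputs with overwhelming probability, if not already built in).

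For IND-CPA security, I would run the Sahai-Waters hybrids against a QPT adversary $\cA$:
\begin{itemize}
    \item $H_0$: The real IND-CPA experiment with challenge message $m_b$ and challenge randomness $r^*$; the challenge ciphertext is $\bigl(G(r^*),\, F_K(G(r^*)) \oplus m_b\bigr)$.
    \item $H_1$: Replace $G(r^*)$ in the challenge ciphertext by a uniform $u \gets \bit^{2n}$. Indistinguishability follows from post-quantum PRG security (a QPT distinguisher here directly breaks $G$).
    \item $H_2$: Condition on $u \notin \mathrm{Image}(G)$, which holds except with probability $2^{-n}$. Replace $\pk = \Obf(1^\secp, \mathsf{Enc}_K)$ by $\Obf(1^\secp, \mathsf{Enc}'_{K_u})$, where $\mathsf{Enc}'_{K_u}(r,m) = (G(r), F_{K_u}(G(r)) \oplus m)$ and $K_u = \algo{Puncture}(K,u)$. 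Since $u \notin \mathrm{Image}(G)$, the two circuits are functionally equivalent, so this hybrid is indistinguishable from $H_1$ by the $(\CC,\CC,\CC)$-iO security against non-uniform QPT adversaries (the non-uniform advice absorbs $u$ when needed). Both circuits can be padded to the same size.
    \item $H_3$: Replace $F_K(u)$ appearing in the challenge by a uniformly random $\ell$-bit string $t$. Since the obfuscated circuit in $\pk$ now depends only on $K_u$, this step reduces to the punctured PRF pseudorandomness at $u$ against QPT adversaries.
    \item $H_4$: The challenge becomes $(u, t \oplus m_b)$, which is uniformly random and independent of $b$. Hence $\Pr[H_4 = 1 \mid b=0] = \Pr[H_4 = 1 \mid b=1]$.
\end{itemize}
Reversing the hybrids for $b=1$ yields the symmetric bound, so $|\Pr[b'=1 \mid b=0] - \Pr[b'=1 \mid b=1]| \le \negl(\secp)$.

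The only non-routine point is the hybrid $H_2$, which needs iO security against \emph{non-uniform} QPT adversaries with classical advice carrying $u$ (and the punctured key $K_u$, or at least the two circuit descriptions). Our \Cref{def:qiO} and its $(\CC,\CC,\CC)$ specialization explicitly guarantee exactly this form of security, so no additional assumption is needed. Post-quantum security of the GGM PRF with puncturing is standard (the reduction is classical and only requires the underlying PRG to be post-quantum secure), so no conceptual obstacle arises there either. The main work in writing the full proof is bookkeeping: padding the two circuits to the same size, handling the inner correctness probability of $\Eval$, and verifying that each reduction runs in QPT.
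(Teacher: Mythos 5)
Your proposal is correct and is essentially the proof the paper relies on: the lemma is imported directly from Sahai--Waters~\cite{STOC:SahWat14}, and your reconstruction (PRG and GGM-based puncturable PRF from OWFs, then the standard puncturing hybrid argument, with the observation that every reduction is black-box and hence carries over to QPT adversaries and that the non-uniform iO security definition absorbs the hardwired $u$ and punctured key) is exactly the intended instantiation. No gaps.
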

In the construction of \cite{STOC:SahWat14}, the encryption algorithm runs $\Eval$.
Thus, by adapting their construction to $(\CC,\QQ,\CC)$-iO in which only $\Eval$ is quantum algorithm but both of $\Obf$ and the obfuscated encoding are classical, we obtain QCCC PKE scheme in which only the encryption algorithm is quantum.
\begin{corollary}\label{cor:QCCCPKE}
    If there exist $(\CC,\QQ,\CC)$-iO for classical circuits and OWFs, then there exist IND-CPA secure QCCC PKE schemes.
\end{corollary}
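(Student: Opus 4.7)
The plan is to adapt the Sahai--Waters PKE construction~\cite{STOC:SahWat14} to the $(\CC,\QQ,\CC)$ setting, using the two hypotheses (OWFs and $(\CC,\QQ,\CC)$-iO for classical circuits) directly and checking that the resulting scheme fits the QCCC syntax of \Cref{def:QCCC_PKE}. From the OWF hypothesis I invoke the standard fact that post-quantum OWFs imply post-quantum puncturable pseudorandom functions $\mathsf{PRF}$ and post-quantum length-doubling PRGs $G$. The scheme is then the Sahai--Waters scheme verbatim: $\Gen(1^\secp)$ samples a puncturable PRF key $K$, forms the classical circuit $C_K$ that on input $(r,\msg)$ outputs $(G(r),\mathsf{PRF}_K(G(r))\oplus \msg)$, and outputs $\pk=\Obf(1^\secp,C_K)$ and $\sk=K$; $\Enc(\pk,\msg)$ samples $r\gets\bit^{\secp}$ and outputs $\ct\gets\Eval(\pk,(r,\msg))$; $\Dec(\sk,\ct=(t,c))$ outputs $c\oplus \mathsf{PRF}_K(t)$.

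First I would check syntactic fit with \Cref{def:QCCC_PKE}. Since $(\mathtt{X},\mathtt{Y},\mathtt{Z})=(\CC,\QQ,\CC)$, the obfuscator $\Obf$ is a PPT algorithm and its output is a classical string, so $\Gen$ is a (classical) PPT algorithm and $\pk$ is classical. The evaluator $\Eval$ is QPT but outputs a classical string, so $\Enc$ is QPT with classical output, matching the QCCC requirement. $\Dec$ is classical PPT (a single $\mathsf{PRF}$ evaluation and an XOR). Correctness of the scheme then reduces to the correctness of the iO on the classical circuit $C_K$: with overwhelming probability over $\Obf$, the evaluator agrees with $C_K$ on the relevant inputs, and decryption recovers $\msg$ by inverting the XOR mask.

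Next I would carry out the IND-CPA proof by a sequence of hybrids that mirrors~\cite{STOC:SahWat14}. Let $r^*$ be the encryption randomness and $t^*=G(r^*)$. Hybrid~1 is the real game with challenge message $\msg_b$. Hybrid~2 replaces $t^*$ by a uniformly random string $t^*\gets\bit^{2\secp}$; by the post-quantum security of $G$, this is indistinguishable, and with overwhelming probability $t^*$ lies outside the image of $G$. Hybrid~3 replaces $\pk$ by $\Obf(1^\secp,C_{K\setminus t^*})$ where the circuit has the PRF key punctured at $t^*$ and is hard-wired to output $\bot$ on any preimage of $t^*$; since no preimage exists in Hybrid~2, this circuit is functionally equivalent to $C_K$, so the security of the $(\CC,\QQ,\CC)$-iO against non-uniform QPT adversaries (\Cref{def:qiO}, applied in the $(\CC,\QQ,\CC)$ variant of \Cref{def:variant_qiO}) yields indistinguishability. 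Hybrid~4 replaces $\mathsf{PRF}_K(t^*)$ in the challenge ciphertext by a uniformly random string, which follows from the post-quantum puncturable PRF security since the adversary only sees $K$ punctured at $t^*$. In Hybrid~4 the challenge is a one-time pad on $\msg_b$, so the two $b$-worlds are identically distributed.

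The main obstacle is verifying that the Sahai--Waters hybrid chain survives the change of model, namely a quantum distinguisher and a quantum $\Eval$ inside the reduction. Syntactically, each reduction between adjacent hybrids is a non-uniform QPT algorithm that hardwires an adversarial circuit, queries either the iO challenger or the puncturable PRF challenger, and simulates the rest of the game; the hardwired advice is classical (the punctured key, the challenge ciphertext seed, messages). Since the iO security in \Cref{def:qiO} is stated against non-uniform QPT adversaries with classical advice, and the assumed OWF implies a puncturable PRF that is post-quantum secure in the same advice regime, every hop goes through. The only subtlety is that $\Eval$ is randomized and quantum: but its output is classical and used only as the ciphertext, so its internal randomness is absorbed into the ciphertext distribution and does not interfere with the classical hardwiring done by the reduction. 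This gives an IND-CPA secure QCCC PKE scheme, completing the proof.
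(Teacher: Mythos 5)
Your proposal is correct and takes essentially the same route as the paper: the paper's proof of this corollary is a one-sentence observation that in the Sahai--Waters construction only the encryption algorithm runs $\Eval$, so instantiating it with $(\CC,\QQ,\CC)$-iO keeps $\Gen$, $\pk$, $\ct$, and $\Dec$ classical while making $\Enc$ quantum, yielding a QCCC PKE. You simply spell out the construction, the syntactic check, and the standard post-quantum hybrid chain in more detail than the paper does, and these details are all sound.
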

Now we are ready to prove \cref{thm:CQC_OWF_QCCC_PKE}.
\begin{proof}[Proof of \cref{thm:CQC_OWF_QCCC_PKE}]
    By \cref{lem:EFID_OWF,cor:QCCCPKE}, it suffices to construct a pair of classical-polynomial-time-samplable distributions that are statistically far but computationally indistinguishable.
    By applying \cref{thm:main} for $\ell=1$, there exist classical-polynomial-time-computable polynomials $m$ and $s$ such that the following two distributions are computationally indistinguishable:
    \begin{itemize}
        \item $\cD_0(\secp)$: Sample $k\gets\bit^{m(\secp)}$, run $\hat{P}_{k,s(\secp)}\gets\Obf(1^\secp,P_{k,s(\secp)})$, and output $\hat{P}_{k,s(\secp)}$.
        \item $\cD_1(\secp)$: Run $\hat{Z}_{m(\secp),s(\secp)}\gets\Obf(1^\secp,Z_{m(\secp),s(\secp)})$ and output $\hat{Z}_{m(\secp),s(\secp)}$.
    \end{itemize}
    Moreover both of $\cD_0(\secp)$ and $\cD_1(\secp)$ are classical-polynomial-time-samplable because $m$ and $s$ are classical-polynomial-time-computable and $\Obf$ is a PPT algorithm.
    Thus, to complete the proof, we show that $\cD_0(\secp)$ and $\cD_1(\secp)$ are statistically far.
    To show this, we construct an unbounded-time distinguisher $\cA$ that distinguishes $\cD_0(\secp)$ and $\cD_1(\secp)$. 
    \begin{description}
        \item[$\cA(\hat{C})$:]
        Upon receiving an obfuscated encoding $\hat{C}$,  
        it computes $p_{k'}:=\Pr[\Eval(\hat{C},k')=1]$  for all $k'\in \bit^{m(\secp)}$. If there is $k'\in \bit^{m(\secp)}$ such that 
        $p_{k'}\ge 1/2$, it outputs $0$, otherwise it outputs $1$.  
    \end{description}
If $\hat{C}=\hat{P}_{k,s(\secp)}\gets \cD_0(\secp)$, the correctness of the iO implies that $\Pr[p_k\ge 1-\negl(\secp)]\ge 1-\negl(\secp)$ where the probability is taken over the randomness in the sampling procedure of $\cD_0$. Thus, we have $\Pr[\cA(\hat{C})=0]\ge 1-\negl(\secp)$. On the other hand, if $\hat{C}=\hat{Z}_{m(\secp),s(\secp)}\gets \cD_1(\secp)$, the correctness of the iO implies that $\Pr[\forall k'\in \bit^{m(\secp)},  p_{k'}\le \negl(\secp)]\ge 1-\negl(\secp)$ where the probability is taken over the randomness in the sampling procedure of $\cD_1$. Thus, we have $\Pr[\cA(\hat{C})=1]\ge 1-\negl(\secp)$. Therefore, $\cD_0(\secp)$ and $\cD_1(\secp)$ are statistically far and we complete the proof.
\if0 
    For each $k\in\bit^{m(\secp)}$, we define a distribution $\cD_0^k(\secp)$ that runs $\hat{P}_{k,s(\secp)}\gets\Obf(1^\secp,P_{k,s(\secp)})$ and outputs $\hat{P}_{k,s(\secp)}$.
    Then,
    \mor{below, $v$ is a typo of $k$?}
    \mor{Saisyo no tougou naritatsu?}
    \begin{align}
        \SD(\cD_0(\secp),\cD_1(\secp)) 
        &= \underset{v\gets\bit^{m(\secp)}}{\mathbb{E}} \left[ \SD(\cD_0^k(\secp),\cD_1(\secp)) \right] \\ 
        &\ge \underset{v\gets\bit^{m(\secp)}}{\mathbb{E}} \left[ \SD(\Eval(\cD_0^k(\secp),k),\Eval(\cD_1(\secp),k)) \right] \\ 
        &= \underset{v\gets\bit^{m(\secp)}}{\mathbb{E}} \left[ \frac{1}{2} \sum_{b\in\bit} \left| \Pr[b\gets\Eval(\hat{P}_{k,s(\secp)},k)] - \Pr[b\gets\hat{Z}_{m(\secp),s(\secp)},k] \right|\right] \\ 
        &\ge \underset{v\gets\bit^{m(\secp)}}{\mathbb{E}} [1-\negl(\secp)] \quad (\text{By the correctness of iO.}) \\ 
        &= 1-\negl(\secp).
    \end{align}
    Here, $\SD(X,Y)$ denotes the statistical distance between two distributions $X$ and $Y$, defined as $\SD(X,Y):=\frac{1}{2}\sum_a|\Pr[a\gets X]-\Pr[a\gets Y]|$.
    Therefore, $\cD_0(\secp)$ and $\cD_1(\secp)$ are statistically far and we complete the proof. \fi
\end{proof}

\begin{remark}
    One might think that \Cref{thm:CQC_OWF_QCCC_PKE} directly follows from an adaptation of the technique of \cite{FOCS:KMNPRY14} since we can derandomize $\Obf$ when it is classical. In fact, this is true in the perfectly correct case. On the other hand, this does not work in the imperfect case (as in \Cref{def:qiO}) since their proof in the imperfect setting involves obfuscation of an obfuscated circuit, but this is not possible in our setting since $\Eval$ is a quantum algorithm and thus cannot be obfuscated by iO for classical circuits as is considered in this paper.  
\end{remark}

Note that $\mathsf{NP}\nsubseteq\mathsf{i.o.BQP}$ and the existence of $(\CC,\QQ,\CC)$-iO for classical circuits imply all Microcrypt primitives since they imply the existence of OWFs (and therefore PRUs~\cite{cryptoeprint:2024/1652}). 

\subsection{C-Obf, C-Eval, and C-Encoding}
Here, we study implications of $(\CC,\CC,\CC)$-iO, where both $\Obf$ and $\Eval$ are classical algorithms and an obfuscated encoding $\hat{C}$ is classical. This is oftend referred to as post-quantum iO. 
We prove the following theorem:
\begin{theorem}\label{thm:CCC_OWF_PKE}
    Suppose $\mathsf{NP}\nsubseteq\mathsf{i.o.BQP}$ and there exists $(\CC,\CC,\CC)$-iO for classical circuits. Then there exist OWFs and an IND-CPA secure PKE scheme. 
\end{theorem}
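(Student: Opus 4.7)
The plan is to reuse the argument from the proof of \Cref{thm:CQC_OWF_QCCC_PKE} in the fully classical setting, and then invoke the Sahai--Waters transformation via \Cref{lem:PKE}. First I would construct a (quantumly-secure) OWF. By \Cref{lem:EFID_OWF}, it suffices to exhibit a pair of classical-polynomial-time-samplable distributions that are statistically far but computationally indistinguishable against QPT adversaries. I apply \Cref{thm:main} with $\ell=1$ to obtain classical-polynomial-time-computable polynomials $m$ and $s$ such that the distributions
\begin{align}
\cD_0(\secp):\ k\gets\bit^{m(\secp)},\ \hat{P}_{k,s(\secp)}\gets\Obf(1^\secp,P_{k,s(\secp)});\qquad
\cD_1(\secp):\ \hat{Z}_{m(\secp),s(\secp)}\gets\Obf(1^\secp,Z_{m(\secp),s(\secp)})
\end{align}
are computationally indistinguishable against QPT adversaries. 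Crucially, because $\mathtt{X}=\CC$ here, $\Obf$ is a PPT algorithm, so both $\cD_0$ and $\cD_1$ are classically efficiently samplable.

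Next I would establish statistical farness between $\cD_0$ and $\cD_1$ by exhibiting an unbounded distinguisher, exactly mirroring the one used in the proof of \Cref{thm:CQC_OWF_QCCC_PKE}. Given an obfuscated encoding $\hat{C}$, the distinguisher computes $p_{k'}\coloneqq\Pr[\Eval(\hat{C},k')=1]$ for every $k'\in\bit^{m(\secp)}$, outputs $0$ if some $p_{k'}\ge 1/2$ exists, and outputs $1$ otherwise. The correctness of the iO guarantees that on a sample from $\cD_0$ we have $p_k\ge 1-\negl(\secp)$ with overwhelming probability, while on a sample from $\cD_1$ all $p_{k'}$ are negligible with overwhelming probability. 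Hence the two distributions are statistically far. Combined with computational indistinguishability, this yields an EFID pair, so \Cref{lem:EFID_OWF} produces a (post-quantum) OWF.

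Finally, having both a $(\CC,\CC,\CC)$-iO for classical circuits and a post-quantum OWF, I directly invoke \Cref{lem:PKE} (the Sahai--Waters construction) to obtain an IND-CPA secure PKE scheme; all algorithms are classical, and security against QPT adversaries follows from the post-quantum security of the underlying iO and OWF used in the construction. The main thing to be careful about is ensuring that the argument of \Cref{thm:main} and the farness argument go through cleanly with $\Obf$ classical (which is immediate since \Cref{thm:main} is stated uniformly across all $(\mathtt{X},\mathtt{Y},\mathtt{Z})\in\{\QQ,\CC\}^3$). There is no substantive obstacle: the classical setting here is strictly easier than the $(\CC,\QQ,\CC)$ case already handled.
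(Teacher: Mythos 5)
Your proposal is correct and matches the paper's own proof, which likewise reduces to \Cref{lem:EFID_OWF} and \Cref{lem:PKE}, applies \Cref{thm:main} with $\ell=1$, and verifies statistical farness by the same unbounded distinguisher as in the proof of \Cref{thm:CQC_OWF_QCCC_PKE}. The paper's proof is stated more tersely (explicitly deferring to the $(\CC,\QQ,\CC)$ argument), but the content is identical.
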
 
\begin{proof}[Proof of \cref{thm:CCC_OWF_PKE}]
    This proof is essentially same as the proof of \cref{thm:CQC_OWF_QCCC_PKE}.
    By \cref{lem:EFID_OWF,lem:PKE}, it suffices to construct a pair of classical-polynomial-time-samplable distributions that are statistically far but computationally indistinguishable.
    By applying \cref{thm:main} for $\ell=1$, there exist classical-polynomial-time-computable polynomials $m$ and $s$ such that the distributions over $\Obf(1^\secp,P_{k,s(\secp)})$ and $\Obf(1^\secp,Z_{m(\secp),s(\secp)})$ are computationally indistinguishable, where $k\gets\bit^{m(\secp)}$. We can show that they are classical-polynomial-time-samplable and statistically far by adapting the same argument used in the proof of \cref{thm:CQC_OWF_QCCC_PKE}.
\end{proof}

\begin{remark}
We could also prove \Cref{thm:CCC_OWF_PKE} by a straightforward adaptation of \cite{FOCS:KMNPRY14}. On the other hand, an advantage of our approach is that it in fact only needs iO for 3CNF formulas rather than general classical circuits. 
Constructing OWFs from imperfect iO for 3CNF formulas was an open problem left by \cite{FOCS:KMNPRY14}, and our alternative proof resolves this open problem, though the open problem itself was also recently resolved (in a stronger form) in \cite{STOC:HirNan24,cryptoeprint:2024/800} by completely different techniques. 
\end{remark}

Note that $\mathsf{NP}\nsubseteq\mathsf{i.o.BQP}$ and the existence of $(\CC,\CC,\CC)$-iO for classical circuits imply all Microcrypt primitives since they imply the existence of OWFs. 

\ifnum\anonymous=1
\else
{\bf Acknowledgements.}
We thank Minki Hhan, Giulio Malavolta, and Kabir Tomer for insightful discussions on the initial ideas of this work during QIP 2024. 
TM is supported by
JST CREST JPMJCR23I3,
JST Moonshot R\verb|&|D JPMJMS2061-5-1-1, 
JST FOREST, 
MEXT QLEAP, 
the Grant-in Aid for Transformative Research Areas (A) 21H05183,
and 
the Grant-in-Aid for Scientific Research (A) No.22H00522.
YS is supported by JST SPRING, Grant Number JPMJSP2110.
\fi

\appendix

\ifnum\submission=0
\bibliographystyle{alpha} 
\else
\bibliographystyle{splncs04}
\fi
\bibliography{abbrev3,crypto,reference,text}

\end{document}